\let\csname equation*\endcsname\relax
\let\csname endequation*\endcsname\relax
\newtheorem{theorem}{Theorem}
\newtheorem{lemma}[theorem]{Lemma}
\DeclarePairedDelimiter{\abs}{\lvert}{\rvert}
\begin{document}
\title{Modified Grover operator for quantum amplitude estimation}

\author{Shumpei~Uno$^{1,2}$,
	Yohichi~Suzuki$^{1}$,
	Keigo~Hisanaga$^{3}$,
	Rudy~Raymond$^{1,4}$,
	Tomoki~Tanaka$^{1,5}$,
	Tamiya~Onodera$^{1,4}$,
	Naoki~Yamamoto$^{1,3}$
}

\address{$^1$ Quantum Computing Center, Keio University, Hiyoshi 3-14-1, Kohoku-ku, Yokohama 223-8522, Japan}
\address{$^2$ Mizuho Information \& Research Institute, Inc., 2-3 Kanda-Nishikicho, Chiyoda-ku, Tokyo, 101-8443, Japan}
\address{$^3$ Department of Applied Physics and Physico-Informatics, Keio University, Hiyoshi 3-14-1, Kohoku-ku, Yokohama 223-8522, Japan}
\address{$^4$ IBM Quantum, IBM Research-Tokyo, 19-21 Nihonbashi Hakozaki-cho, Chuo-ku, Tokyo, 103-8510, Japan}
\address{$^5$ Mitsubishi UFJ Financial Group, Inc.\ and MUFG Bank, Ltd., 2-7-1 Marunouchi, Chiyoda-ku, Tokyo 100-8388, Japan}

\ead{yamamoto@appi.keio.ac.jp}

\begin{abstract}
	In this paper, we propose a quantum amplitude estimation method  that uses a modified Grover operator and quadratically improves the estimation accuracy in the ideal case, as in the
	conventional one using the standard Grover operator.
	Under the depolarizing noise, the proposed method can outperform the conventional one in the sense that it can in principle achieve the ultimate estimation accuracy characterized by the quantum Fisher information in the limit of a large number of qubits, while the conventional one cannot achieve the same value of ultimate accuracy.
	In general this superiority requires a sophisticated adaptive measurement, but we numerically demonstrate that the proposed method can outperform the conventional one and approach to the ultimate accuracy, even with a simple non-adaptive measurement strategy.
\end{abstract}
\vspace{2pc}
\noindent{\it Keywords}: quantum computing, quantum amplitude estimation, quantum metrology, Fisher information
\submitto{\NJP}

\section{Introduction}

Quantum computers embody a physically feasible computational model which may exceed
the limit of conventional computers, and have been researched vigorously for several decades.
In particular, the amplitude estimation algorithm~\cite{brassard2002quantum,suzuki2020amplitude,aaronson2020quantum,grinko2019iterative,nakaji2020faster,venkateswaran2020quantum} has attracted a lot of attention as a fundamental subroutine of a wide
range of application-oriented quantum algorithms, such as the Monte Carlo integration~\cite{montanaro2015quantum,rebentrost2018quantum,woerner2019quantum,stamatopoulos2019option,martin2019towards,egger2019credit,miyamoto2019reduction} and machine learning tasks\cite{prakash2014quantum,Wiebe2015,Wiebe2016a,Wiebe2016b,Kerenidis2019q,li2019sublinear,miyahara2020quantum}.
However, those quantum amplitude estimation algorithms are assumed to work on ideal
quantum computers, and their performance on noisy computers should be carefully investigated.
In fact, evaluation as well as mitigation of existing quantum algorithms on noisy quantum
computers has become a particularly important research subject, since real quantum computing
devices are publicly available.
The amplitude estimation algorithm is of course one of those focused algorithms~\cite{brown2020quantum,wang2020bayesian,tanaka2020amplitude}.

The amplitude estimation is the problem of estimating the (unknown) amplitude
of a certain fixed target state, in which the parameter we want to calculate
is encoded.
If we naively measure the state $N$ times and use the measurement results to construct
an estimator, then it gives the estimate with the mean squared error scaling
as $O(1/\sqrt{N})$ due to the central limit theorem.
In contrast, the conventional amplitude estimation algorithms~\cite{brassard2002quantum,suzuki2020amplitude,aaronson2020quantum,grinko2019iterative,nakaji2020faster,venkateswaran2020quantum}, in an ideal setting, quadratically improve the estimation error to $O(1/N)$, using the Grover operator (or the amplitude amplification operator in a wider sense) associated with the target state.
Here let us reconsider the role of target state.
In the Grover search problem~\cite{grover1996fast}, the goal is to hit the unknown target state with high probability;
but in the amplitude estimation problem, the goal is not to get the target state, but is to
obtain an information of the unknown amplitude.
Hence there might be some other ``desirable state'' whose hitting probability can be used
to construct a more precise estimator on the original amplitude, than the conventional
Grover-based one.
In particular, following the discussion in the first paragraph, this paper explores a noise-tolerant state as such a desirable state.

Therefore the questions we should answer are as follows.
Is there actually a noise-tolerant state, such that its hitting probability leads to
a better estimator than the case using the target state, in the presence of noise?
A candidate of such a noise-tolerant state would be a product state, e.g., $\Ket{0}_n$, which may be more robust than the (entangled) target state.
Moreover, if it exists, how much does the new estimator reduce the estimation error?
Does it achieve the optimal precision under some condition?
This paper provides affirmative answers to all these questions.

The analysis is based on the $1$-parameter quantum statistical estimation theory~\cite{helstrom1968minimum,holevo2011probabilistic, braunstein1994statistical,fujiwara2001quantum}.
A one-sentence summary of the theorem used in this paper is that, under some reasonable
assumptions, the estimation error is generally lower bounded by the inverse of classical Fisher
information for a fixed measurement, which is further lower bounded by the inverse of quantum
Fisher information without respect to measurement.
Hence the quantum Fisher information gives the ultimate lower bound of estimation error,
although constructing an estimator that achieves the quantum Fisher information generally
requires a sophisticated adaptive measurement strategy, which is difficult to implement
in practice~\cite{paris2009quantum,giovannetti2011advances,kolodynski2014precision,haase2016precision,toth2014quantum,braun2018quantum}.
In this paper, we assume the depolarizing error, which was observed in the real device~\cite{tanaka2020amplitude}.
Then we take the product state $\ket{0}_{n+1}$ as a noise-tolerant state
and show that, like the conventional amplitude amplification (Grover search)
operator $G$, a newly introduced ``transformed amplitude amplification operator''
$Q$ realizes a $2$-dimensional rotation of $\ket{0}_{n+1}$ in the Hilbert
space and eventually leads to alternative estimator.
Then the Fisher information is calculated;
$F_{\textrm{c}, G}(\cdot)$ and $F_{\textrm{q}, G}(\cdot)$ denote the classical and quantum Fisher information
of the state generated by $G$, respectively;
also $F_{\textrm{c}, Q}(\cdot)$ and $F_{\textrm{q}, Q}(\cdot)$ denote the classical and quantum Fisher
information of the state generated by $Q$, respectively.
Below we list the contribution of this paper.

\begin{itemize}
	\item
	      In the ideal case, $F_{\textrm{c}, G}(N_\textrm{q}) = F_{\textrm{q}, G}(N_\textrm{q}) = F_{\textrm{c}, Q}(N_\textrm{q}) = F_{\textrm{q}, Q}(N_\textrm{q}) = 4 N_\textrm{q}^2$,
	      where $N_\textrm{q}$ denotes the number of operations (note that they are all independent to the
	      unknown parameter).
	      That is, there is no difference between $G$- and $Q$-based methods, and in both cases
	      there exists an estimator (e.g., the maximum likelihood estimator) that achieves the ultimate
	      precision.

	\item
	      Under the depolarizing noise,
	      $F_{\textrm{c}, G}(N_\textrm{q})|_{\textrm{env}} \leq F_{\textrm{c}, Q}(N_\textrm{q})|_{\textrm{env}} \leq F_{\textrm{q}, G}(N_\textrm{q}) = F_{\textrm{q}, Q}(N_\textrm{q})$,
	      where $\cdot|_{\textrm{env}}$ denote the upper envelope of the Fisher information as a function of $N_\textrm{q}$.
	      That is, the $Q$-based method using the noise-tolerant state is better than the
	      conventional $G$-based method, in the sense of Fisher information.
	      In particular, $F_{\textrm{c}, Q}(N_\textrm{q})|_{\textrm{env}} \to F_{\textrm{q}, Q}(N_\textrm{q})$ holds and hence the ultimate precision
	      is achievable in the limit of a large number of qubit.

	\item
	      Despite the above-mentioned general difficulty in achieving the quantum Fisher information,
	      we show numerically that, under the depolarizing noise, a simple non-adaptive estimator
	      for $Q$-based method achieves the precision about $1/3$ to $1/2$ of the $G$-based method, which is about $1.2$ to $1.5$ times the ultimate limit $1/F_{\textrm{q}, Q}(N_\textrm{q})$.

\end{itemize}

The rest of this paper is organized as follows.
In Section~\ref{sec:Preliminaries}, we present the conventional $G$-based amplitude estimation
method, and its classical and quantum Fisher information under the depolarizing noise.
We also show there that the $G$-based method can never attain the quantum Fisher
information under the noise.
Next in Section~\ref{sec:AlternativeOperator}, we propose an alternative $Q$-based method using the
noise-tolerant state, and discuss the classical and quantum Fisher information.
Then it is shown that the quantum Fisher information can be achieved in the limit of a large
number of qubits.
In Section~\ref{sec:NumericalSimulation}, we demonstrate by numerical simulation that the
proposed $Q$-based method achieves a better estimation accuracy than the $G$-based method, with
the simple non-adaptive measurement strategy.
Finally, we conclude this paper in Section~\ref{sec:Conclusion}.

\section{Conventional Method}\label{sec:Preliminaries}

The quantum amplitude estimation is the problem of estimating the value
of an unknown parameter $\theta$ of the following state, given a unitary
operator $A$ acting on the $(n+1)$-qubit initial state $\Ket{0}_{n+1}$:
\begin{equation}
	A\Ket{0}_{n+1} = \cos\theta \Ket{\psi_0}_n \Ket{0}+ \sin\theta\Ket{\psi_1}_n\Ket{1},
	\label{eq:InitialStateG}
\end{equation}
where $\Ket{\psi_0}_n $ and $\Ket{\psi_1}_n$ are normalized $n$-qubit states.
Note that the state~\eqref{eq:InitialStateG} has an additional ancilla qubit to represent solutions (``good'' subspace), in contrast to the original Grover search which searches for solutions in $n$-qubit space.
Measuring the ancilla qubit $N_{\textrm{shot}}$ times gives us a simple
estimate with estimation error (the root squared mean error) scaling as
$1/\sqrt{N_{\textrm{shot}}}$ due to the central limit theorem.
The purpose of amplitude estimation algorithms is to reduce the cost to
achieve the same error;
in this paper, the cost of algorithm is counted by the number of queries
on the operator $A$ and its inverse $A^\dagger$, as in the previous
works~\cite{brassard2002quantum,suzuki2020amplitude,aaronson2020quantum,grinko2019iterative,nakaji2020faster,venkateswaran2020quantum}.
This section is devoted to provide an overview of this conventional
quantum-enhanced method for both noiseless and depolarizing-noisy cases,
with particular focus on the classical and quantum Fisher information.
The discussion mainly follows the previous papers~\cite{suzuki2020amplitude,tanaka2020amplitude,jiang2014quantum,yao2014multiple}.

\subsection{Noiseless Case}

The recent amplitude estimation algorithms~\cite{suzuki2020amplitude,aaronson2020quantum,grinko2019iterative,nakaji2020faster,venkateswaran2020quantum}, which do not require the expensive phase estimation, are composed of the following two steps.
The first step is to amplify the amplitude of the target state and perform the  measurement on the state, and the second step is to estimate the amplitude using
the measurement results and a subsequent classical post-processing.

The first step uses the amplitude amplification operator $G$ defined as~\cite{brassard2002quantum}
\begin{align}
	G = A U_0 A^\dagger  U_f,
	\label{eq:OperatorG}
\end{align}
where $U_0$ and $U_f$ are the reflection operators defined as
\begin{align}
	U_0 & = -\mathbf{I}_{n+1} +2 \Ket{0}_{n+1} \Bra{0}_{n+1},          \\
	U_f & = -\mathbf{I}_{n+1} +2 \mathbf{I}_n \otimes \Ket{0} \Bra{0}.
\end{align}
Here, $\mathbf{I}_n$ is the identity operator on the $n$ qubits space.
Via $m$ times application of the operator $G$ on the state~\eqref{eq:InitialStateG},
we obtain
\begin{equation}
	G^m A \Ket{0}_{n+1}
	= \cos\left( (2m+1)\theta\right)\Ket{\psi_0}_n\Ket{0}
	+ \sin\left((2m+1)\theta\right)\Ket{\psi_1}_n\Ket{1}.
\end{equation}
Note now that the number of queries of $A$ and $A^\dagger$ is
$2m+1$.
In the following, we regard this state as a (vector-valued) function
of the general number of queries, $N_\textrm{q}$, instead of $m$; i.e.,
\begin{equation}
	\Ket{\psi_{G}(\theta,N_\textrm{q})}
	\equiv \cos\left(N_\textrm{q}\theta\right)\Ket{\psi_0}_n\Ket{0}
	+ \sin\left(N_\textrm{q}\theta\right)\Ket{\psi_1}_n\Ket{1},
	\label{eq:StateG^m}
\end{equation}
but remember that $\Ket{\psi_{G}(\theta,N_\textrm{q})}$ can take only the odd number in $N_\textrm{q}$.
We then measure the state
$\rho_G(\theta,N_\textrm{q}) = \Ket{\psi_{G}(\theta,N_\textrm{q})}\Bra{\psi_{G}(\theta,N_\textrm{q})}$
via the set of measurement operators $E_{G,0}=\mathbf{I}_{n}\otimes\ket{0}\bra{0}$
and $E_{G,1}=\mathbf{I}_{n}\otimes\ket{1}\bra{1}$, which distinguishes whether
the last bit is $0$ or $1$;
the probability distribution
$p_G(i;\theta,N_\textrm{q})=\textrm{Tr}(\rho_G(\theta,N_\textrm{q})E_{G,i})$ is then calculated as
\begin{equation}
	\begin{split}
		p_G(0;\theta,N_\textrm{q})&=\cos^2(N_\textrm{q}\theta), \\
		p_G(1;\theta,N_\textrm{q})&=\sin^2(N_\textrm{q}\theta).
		\label{eq:ProbabilityPureG}
	\end{split}
\end{equation}

The second step is to post-process the measurement results sampled from the
probability distribution~\eqref{eq:ProbabilityPureG} to estimate the parameter
$\theta$.
If we have $N_{\textrm{shot}}$ independent measurement results, the mean squared
error of any unbiased estimate $\hat{\theta}$ from the measurement results obeys
the following Cram\'er--Rao inequality:
\begin{equation}
	\mathbb{E}\left[
		(\theta-\hat{\theta})^2\right] \geq \frac{1}{N_{\textrm{shot}} F_{\textrm{c},G}(\theta,N_\textrm{q})},
	\label{eq:CCRBPureG}
\end{equation}
where $\mathbb{E}[\cdots]$ represents the expectation of measurement results with respect to the probability distribution~\eqref{eq:ProbabilityPureG}.
The classical Fisher information $F_{\textrm{c},G}(\theta, N_\textrm{q})$ associated
with the probability distribution~\eqref{eq:ProbabilityPureG} is given by
\begin{equation}
	\begin{split}
		F_{\textrm{c},G}(\theta,N_\textrm{q})
		= \mathbb{E}\left[
			\left(
			\frac{\partial}{\partial \theta}\ln p_G(i;\theta,N_\textrm{q}) \right)^2 \right]
		= \sum_{i\in\{0,1\}} \left(
		\frac{\partial}{\partial \theta}\ln p_G(i;\theta,N_\textrm{q}) \right)^2 p_G(i;\theta,N_\textrm{q})
		=4 N_{q}^2.
		\label{eq:CFisherPureG}
	\end{split}
\end{equation}
It is well known that the maximum likelihood estimation is able to asymptotically
attain the lower bound of Cram\'er--Rao inequality in the limit of a large number
of samples, $N_{\textrm{shot}}\to \infty$~\cite{fisher1923xxi}.
In the previous paper~\cite{suzuki2020amplitude}, it was demonstrated by a numerical simulation (with ${N_{\textrm{shot}}=100}$ and  $\sin^2\theta\in\{2/3,1/3,1/6,1/12,1/24,1/48\}$) that the mean squared error obtained by the maximum likelihood estimation well agrees with the Cram\'er--Rao lower bound.
By combining the Fisher information~\eqref{eq:CFisherPureG} with the Cram\'er--Rao lower bound~\eqref{eq:CCRBPureG}, we find that the root squared estimation error scales as $1/(N_{q}\sqrt{N_{\textrm{shot}}})$.
Denoting the total cost as $n=N_\textrm{q} N_{\textrm{shot}}$, the root squared error scales as $1/n$ in the large $N_\textrm{q}$ limit, which is a quadratic improvement over the naive repeated measurement onto the initial state~\eqref{eq:InitialStateG}, whose error scales as $1/\sqrt{N_{\textrm{shot}}}=1/\sqrt{n}$.

Next, we calculate the quantum Fisher information for this amplitude estimation
problem.
Although we obtain the classical Fisher information~\eqref{eq:CFisherPureG}
with the fixed measurement operators, $E_{G,0}$ and $E_{G.1}$, there is an
innumerable degree of freedom in choosing a set of measurement operators, and
other measurements may provide larger classical Fisher information.
However, it is known that any unbiased estimate $\hat{\theta}$ resulting from
an arbitrary measurement (including POVM) obeys the following quantum Cram\'er--Rao
inequality~\cite{helstrom1968minimum,braunstein1994statistical,holevo2011probabilistic}:
\begin{equation}
	\mathbb{E}\left[
		(\theta-\hat{\theta})^2\right]
	\geq \frac{1}{N_{\textrm{shot}} F_{\textrm{q},G}\left(\theta,N_\textrm{q}\right)},\label{eq:QCRBPure_G}
\end{equation}
where $F_{\textrm{q},G} \left(\theta,N_\textrm{q}\right)$ is the quantum Fisher information
defined as
\begin{equation}
	F_{\textrm{q},G} (\theta,N_\textrm{q}) =  \mathrm{Tr}\left( \rho_G(\theta,N_\textrm{q}) L_S\left(\rho_G\left(\theta,N_\textrm{q}\right)\right)^2 \right).
	\label{eq:QuantumFisherInformation}
\end{equation}
Here $L_S(\rho_G(\theta,N_\textrm{q}))$ is the symmetric logarithmic derivative (SLD)
\footnote{There are other definitions of quantum Fisher information, but it is
	known that the SLD gives the tightest bound for one parameter estimation
	problem~\cite{holevo2011probabilistic,nagaoka2005fisher}.},
which is defined as the Hermitian operator satisfying
\begin{equation}
	\dot{\rho}_G(\theta,N_\textrm{q}) = \frac{1}{2}\left( \rho_G(\theta,N_\textrm{q}) L_S(\rho_G(\theta,N_\textrm{q})) + L_S(\rho_G(\theta,N_\textrm{q})) \rho_G(\theta,N_\textrm{q}) \right).
\end{equation}
Here the overdot represents the derivative with respect to $\theta$.
If the output state is a pure state, the quantum Fisher information is explicitly
given by
\begin{equation}
	\begin{split}
		F_{\textrm{q},G}(\theta,N_\textrm{q})
		&= 4\left( \braket{\dot{\psi}_G(\theta,N_\textrm{q})|\dot{\psi}_G(\theta,N_\textrm{q})}-\abs{\braket{\dot{\psi}_G(\theta,N_\textrm{q})|\psi_G(\theta,N_\textrm{q})}}^2 \right) \\
		&= 4 N_\textrm{q}^2.
		\label{eq:QFisherPureG}
	\end{split}
\end{equation}
In our case, this is exactly the same as the classical Fisher information~\eqref{eq:CFisherPureG}.
This means that the measurement operators $E_{G,0}$ and $E_{G,1}$, which
measure only the last qubit, is the optimal measurement in the absence of noise.

It is worthwhile to comment that the amplitude estimation problem described
above can be regarded as the parameter estimation problem embedded in the
unitary operator $G$.
Such a parameter estimation problem is one of the main target of the field
of quantum metrology~\cite{giovannetti2004quantum,giovannetti2006quantum}.
In the quantum metrology, it is well known that the root squared error of the
estimation for a single unitary operation on each of the $N$ independent
initial states is $1/\sqrt{N}$, which is called the standard quantum limit
or simply the shot noise, while the entanglement or multiple sequential
operation quadratically improves the estimation error to $1/N$, which is
called the Heisenberg scaling.
Unfortunately, it is also known that this quadratic improvement is very
fragile and is ruined even with the infinitesimally small
noise~\cite{kolodynski2010phase,knysh2011scaling,escher2011general}.
We encounter the same issue in the amplitude estimation problem as
described below.

\subsection{Depolarizing Noise Case}\label{sec:DepolarizationG}

Here, we describe a case under the depolarizing noise.
As a noise model, we assume that the following depolarizing noise
acts on the system every time when operating $A$ and $A^\dagger$:
\begin{equation}
	\mathcal{D}(\rho) = r \rho + (1-r) \frac{\mathbf{I}_{n+1}} {d},
	\label{eq:DepolarizationChannel}
\end{equation}
where $\rho$ is an arbitrary density matrix, $r$ is a known constant satisfying $r\in(0,1]$, and $d$ is the dimension of the $(n+1)$-qubit system, i.e., $d=2^{n+1}$.
Then, the amplitude amplification operation $G^m$ under the
depolarizing process, with initial state $A\ket{0}_{n+1}$, gives
\begin{equation}
	\begin{split}
		\rho_{G,r}(\theta,N_\textrm{q}) &= r^{ N_\textrm{q} }  \rho_G(\theta,N_\textrm{q}) + (1-r^{N_\textrm{q}}) \frac{\mathbf{I}_{n+1}} {d},
		\label{eq:StateG^mNoise}
	\end{split}
\end{equation}
where $\rho_G(\theta,N_\textrm{q}) = \Ket{\psi_{G}(\theta,N_\textrm{q})}\Bra{\psi_{G}(\theta,N_\textrm{q})}$.
Recall that $\rho_{G,r}(\theta,N_\textrm{q})$ is represented as a
function of general number of queries, $N_\textrm{q}$, yet it takes
$N_\textrm{q}=2m+1$.
Measuring the state~\eqref{eq:StateG^mNoise} using the set of measurement operators $E_{G,0}$ and $E_{G,1}$, the probability distribution $p_{G,r}(i;\theta,N_\textrm{q})=\textrm{Tr}(\rho_{G,r}(\theta,N_\textrm{q})E_{G,i})$ can be written as
\begin{align}
	p_{G,r}(0;\theta,N_\textrm{q}) & =r^{N_\textrm{q}}\cos^2(N_\textrm{q}\theta)+\frac{1-r^{ N_\textrm{q} }}{2},
	\label{eq:ProbabilityNoiseG0}
	\\
	p_{G,r}(1;\theta,N_\textrm{q}) & =r^{N_\textrm{q}} \sin^2(N_\textrm{q}\theta)+\frac{1-r^{N_\textrm{q}}}{2}.
	\label{eq:ProbabilityNoiseG1}
\end{align}
Note that this probability distribution is independent of the dimension of Hilbert space $d$.
Then the classical Fisher information associated with this probability
distribution is given as~\cite{tanaka2020amplitude}
\begin{equation}
	\begin{split}
		F_{\textrm{c},G,r}(\theta,N_\textrm{q})
		&= \sum_{i\in\{0,1\}} \left(
		\frac{\partial}{\partial \theta}\ln p_{G,r}(i;\theta,N_\textrm{q}) \right)^2 p_{G,r}(i; \theta,N_\textrm{q}) \\
		& = \frac{4 {N_\textrm{q}}^2\sin^2(N_\textrm{q}\theta)\cos^2(N_\textrm{q}\theta)r^{2 N_\textrm{q}}}
		{\left(\frac{1}{2}+ \left( \cos^2 (N_\textrm{q} \theta)-\frac{1}{2} \right)r^{N_\textrm{q}}\right)
			\left(\frac{1}{2}+ \left( \sin^2 (N_\textrm{q} \theta)-\frac{1}{2} \right)r^{N_\textrm{q}}\right)}.
	\end{split}
	\label{eq:CFisherNoiseG}
\end{equation}
This classical Fisher information has the following upper envelope
depicted in Figure~\ref{fig:envelope}, which does not depend on the
unknown parameter $\theta$:
\begin{equation}
	F_{\textrm{c},G,r}(N_\textrm{q}) |_{\textrm{env}} = 4 {N_\textrm{q}}^2 r^{2 N_\textrm{q}}.
	\label{eq:CFisherNoiseG_Max}
\end{equation}
From the Cram\'er--Rao inequality, $1/F_{\textrm{c},G,r}(N_\textrm{q}) |_{\textrm{env}}$ gives a lower bound of the estimation error.
This lower bound can be asymptotically attainable by employing an adaptive measurement strategy, which first divides the $N_{\rm shot}$ states into some small fractions and then adaptively tunes the measurements depending on the previous measurement results, as proven in the channel estimation problem~\cite{gill2000state,fujiwara2001quantum,hayashi2005statistical}.
The above expression shows that, even with a very small noise, the Heisenberg scaling is not achievable asymptotically in the large $N_\textrm{q}$ limit, due to the factor of $r^{2N_\textrm{q}}$.
Note that, nonetheless, it is possible to make a constant factor improvement over the naive sampling onto the initial state~\eqref{eq:InitialStateG}, depending on the magnitude of noise.
The Fisher information of the naive sampling, i.e., \eqref{eq:CFisherNoiseG_Max} with $N_\textrm{q}=1$, is about $4$, while the Fisher information~\eqref{eq:CFisherNoiseG_Max} for $N_\textrm{q}\gg 1$ can far exceed the value $4$ if the noise is sufficiently small.
To discuss this constant factor improvement in more detail, we need to consider Fisher information for a finite range of $N_\textrm{q}$ rather than the asymptotic behavior, which is the subject of the following.

\begin{figure}[tbp]
	\centering
	\includegraphics[width=0.8\linewidth]{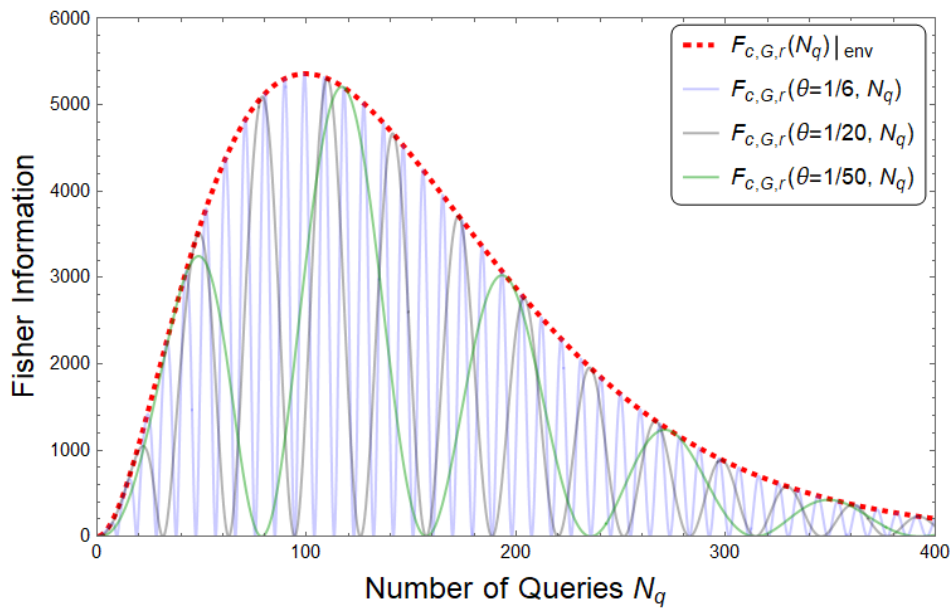}
	\caption{
		Relationship between the number of queries and the classical Fisher information.
		The four solid lines represent the classical Fisher information \eqref{eq:CFisherNoiseG} with several values of $\theta \in \{1/6, 1/20, 1/50\}$.
		The classical Fisher information has the upper envelope \eqref{eq:CFisherNoiseG_Max}, which is depicted by the (red) dotted line.
		The noise parameter is set as $r=0.99$.
	}\label{fig:envelope}
\end{figure}

Next, like the ideal case, let us consider the quantum Fisher information.
As in the previous work on the channel estimation problem~\cite{jiang2014quantum,yao2014multiple},
exploiting the fact that the state of \eqref{eq:StateG^mNoise} can be represented in an exponential form, the SLD operator can be calculated as
\begin{equation}
	L_S(\rho_G(\theta,N_\textrm{q},r)) = \frac{2  r^{N_\textrm{q}}}{\frac{2}{d} + \left(1-\frac{2}{d}\right)r^{N_\textrm{q}}}\dot{\rho}_G(\theta,N_\textrm{q}).
	\label{eq:SLDNoiseG}
\end{equation}
Then the quantum Fisher information can be given as
\begin{equation}
	F_{\textrm{q},G,r}(\theta,N_\textrm{q})= \frac{4 {N_\textrm{q}}^2 r^{2 N_\textrm{q}}}
	{\frac{2}{d}+ \left( 1-\frac{2}{d} \right)r^{N_\textrm{q}} }.
	\label{eq:QFisherNoiseG}
\end{equation}
Importantly, $F_{\textrm{q},G,r}(\theta,N_\textrm{q})$ does not depend on the
unknown parameter $\theta$.
This fact allows us to obtain
$F_{\textrm{q},G,r}(\theta,N_\textrm{q})
	\geq F_{\textrm{c},G,r}(\theta,N_\textrm{q})|_{\textrm{env}}
	\geq F_{\textrm{c},G,r}(\theta,N_\textrm{q})$ for any value of $\theta$.
In the case of single qubit, $n=1$ or equivalently $d=2$, we have
$F_{\textrm{q},G,r}(\theta,N_\textrm{q}) =
	F_{\textrm{c},G,r}(\theta,N_\textrm{q})|_{\textrm{env}}$.
But $F_{\textrm{q},G,r}(\theta,N_\textrm{q})> F_{\textrm{c},G,r}(\theta,N_\textrm{q})|_{\textrm{env}}$ holds when $n \geq 2$.
In particular, in the limit where the qubit number is infinitely large, it follows that
\begin{equation}
	F_{\textrm{q},G,r}(\theta,N_\textrm{q})= 4 {N_\textrm{q}}^2 r^{N_\textrm{q}},
	\label{eq:QFisherNoiseG_InfDim}
\end{equation}
which is bigger than $F_{\textrm{c},G,r}(\theta,N_\textrm{q})|_{\textrm{env}}$
by the factor $r^{N_\textrm{q}}$.
That is, the conventional $G$-based method presented in this section cannot achieve the quantum Fisher information, except the case $n=1$.
In the next section, we present an alternative method that can attain the quantum Fisher information in the limit of a large number of qubits.

It is worthwhile to mention here about the input state of the above method.
The input state $A\Ket{0}_{n+1}$ is an equally weighted superposition of the two eigenstates of the amplitude amplification operator $G$, which is known as the optimal input state in the \textit{sequential strategy} of the channel estimation problem that sequentially applies the unitary $G$ onto single probe~\cite{maccone2013intuitive,yao2014multiple}.
Therefore, the quantum Fisher information~\eqref{eq:QFisherNoiseG} is optimal for any method using the operator $G$.
Furthermore, we prove in~\ref{sec:ProofMaximumQFI} that the quantum Fisher information~\eqref{eq:QFisherNoiseG} is optimal when employing the operator $A$.

\section{Proposed Method}\label{sec:AlternativeOperator}

In this section, we propose an alternative amplitude estimation method, which provides a larger classical Fisher information than that of the
conventional $G$-based method, under the depolarizing noise.
As in the previous section, we first describe the noiseless case and then turn to the case under depolarizing noise.

\subsection{Noiseless Case}

We first define the following modified amplitude amplification operator:
\begin{equation}
	Q =  U_0 A^\dagger U_f A
	\label{eq:OperatorQ}
\end{equation}
This operator is similar to the conventional amplitude amplification operator $G = A U_0 A^\dagger U_f$ given in \eqref{eq:OperatorG}, but there is a big operational meaning as follows.
Note that $G$ induces a rotation by $2\theta$ in the space spanned by
$\ket{\psi_1}_n\ket{1}$ and $\ket{\psi_0}_n\ket{0}$, or equivalently
$A\ket{0}_{n+1}$ and $\ket{\psi_0}_n\ket{0}$.
Then, because of $Q=A^\dagger G A$, we find that $Q$ induces a rotation
by $2\theta$ in the space spanned by $\ket{0}_{n+1}$ and $A^\dagger\ket{\psi_0}_n\ket{0}$ (see Figure~\ref{fig:Rotation}).
Thus, application of $Q^m$ on the initial state $\Ket{0}_{n+1}$ produces
\begin{equation}
	\Ket{\psi_{Q}(\theta,N_\textrm{q})}:=
	Q^m \Ket{0}_{n+1} = \cos(N_\textrm{q}\theta)\Ket{0}_{n+1} + \sin(N_\textrm{q}\theta)\Ket{\phi}_{n+1},
	\label{eq:StateQ^m}
\end{equation}
where
\begin{equation}
	\begin{split}
		\Ket{\phi}_{n+1}
		&=A^\dagger (-\sin\theta \Ket{\psi_0}_n \Ket{0} + \cos \theta\Ket{\psi_1}_n\Ket{1} ) \\
		&=\frac{1}{\sin 2\theta}\left(Q -\cos 2\theta\right) \Ket{0}_{n+1}.
		\label{eq:basisOrthTo0}
	\end{split}
\end{equation}
As in the $G$ case, $\Ket{\psi_{Q}(\theta,N_\textrm{q})}$ is
represented as a (vector-valued) function of the general number of queries, $N_\textrm{q}$, which now takes only the even numbers,
i.e., $N_\textrm{q}=2m$.

\begin{figure}[tbp]
	\centering
	\includegraphics[width=0.8\linewidth]{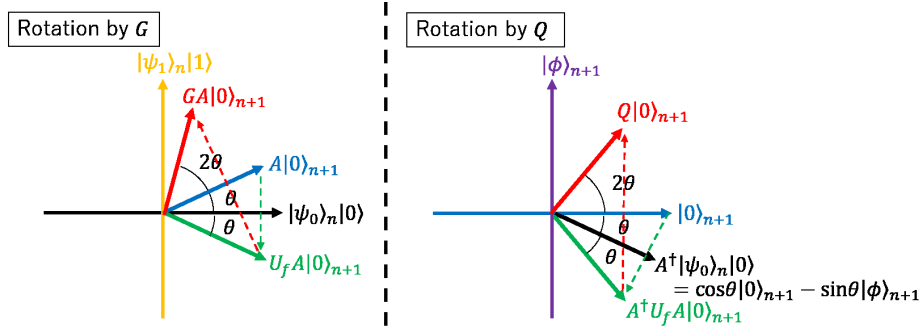}
	\caption{
		The graphical explanation of the action of $G$ (left) and $Q$ (right) on the initial state.
		The operator $G$ defined in \eqref{eq:OperatorG} can be regarded as the rotation with angle $2\theta$ in the space spanned by $\Ket{\psi_0}_n\Ket{0}$ and $\Ket{\psi_1}_n\Ket{1}$.
		First, the action of $U_f$ reflects the initial state $A\Ket{0}_{n+1}$ with respect to $\Ket{\psi_0}_n\Ket{0}$,
		and then the action of $A U_0 A^\dagger$ reflects the resulting state with respect to the initial state $A\ket{0}_{n+1}$,
		so that the rotation with angle $2\theta$ is achieved.
		On the other hand, the operator $Q$ defined in \eqref{eq:OperatorQ} can be regarded as the rotation with angle $2\theta$ in the space spanned by $\Ket{0}_{n+1}$ and $\Ket{\phi}_{n+1}$, i.e., linear transformation via $A$ of the space spanned by $\Ket{\psi_0}_n\Ket{0}$ and $\Ket{\psi_1}_n\Ket{1}$.
		In fact, first $A^\dagger U_f A$ reflects the initial state $\Ket{0}_{n+1}$ with respect to $A^\dagger\Ket{\psi_0}_n\Ket{0}=\cos\theta\Ket{0}_{n+1} -\sin \theta \Ket{\phi}_{n+1}$,
		and then $U_0$ reflects the resulting state with respect to the initial state $\ket{0}_{n+1}$; as a result, $Q\ket{0}_{n+1}$ is
		a $2\theta$-rotated state of $\ket{0}_{n+1}$.
	}\label{fig:Rotation}
\end{figure}

In this method, we employ $E_{Q,0}=\ket{0}_{n+1}\bra{0}_{n+1}$ and $E_{Q,1}=\mathbf{I}_{n+1}-E_{Q,0}$ as the set of measurement operators,
which distinguishes whether all qubits are $0$ or not.
Measuring the state \eqref{eq:StateQ^m} via $\{E_{Q,0}, E_{Q,1}\}$
yields the probability distribution as
\begin{equation}
	\begin{split}
		p_Q(0;\theta,N_\textrm{q})&=\cos^2(N_\textrm{q}\theta), \\
		p_Q(1;\theta,N_\textrm{q})&=\sin^2(N_\textrm{q}\theta).
		\label{eq:ProbabilityQ}
	\end{split}
\end{equation}
The classical Fisher information with respect to this probability
distribution can be calculated as
\begin{align}
	F_{\textrm{c},Q}(\theta,N_\textrm{q}) = 4 N_\textrm{q}^2.
	\label{eq:CFisherPureQ}
\end{align}
This indicates that the estimation error using the operator $Q$
also obeys the Heisenberg scaling in the absence of noise.

The quantum Fisher information for the state~\eqref{eq:StateQ^m} can be calculated in the same manner as \eqref{eq:QFisherPureG};
\begin{align}
	F_{\textrm{q},Q}(\theta,N_\textrm{q}) & = 4 N_\textrm{q}^2.
	\label{eq:QFisherPureQ}
\end{align}
The classical Fisher information~\eqref{eq:CFisherPureQ} coincides with the quantum one \eqref{eq:QFisherPureQ}, while they coincide with the Fisher information~\eqref{eq:CFisherPureG} and~\eqref{eq:QFisherPureG}
for the $G$-based method.
This indicates that the choice of the amplitude amplification operator $G$ or $Q$ does not cause any advantage/disadvantage in the noiseless case.

\subsection{Depolarizing Noise Case}

We assume that the same depolarizing noise~\eqref{eq:DepolarizationChannel} introduced in Section~\ref{sec:DepolarizationG} acts on the system every time when
operating $A$.
Then, the final state after $m$ times operation of $Q$ together with
the noise process, for the initial state $\ket{0}_{n+1}$, is given by
\begin{equation}
	\label{}
	\begin{split}
		\rho_{Q,r}(\theta,N_\textrm{q}) &= r^{ N_\textrm{q} }  \rho_Q(\theta,N_\textrm{q}) + (1-r^{N_\textrm{q}}) \frac{\mathbf{I}_{n+1}} {d},
		\label{eq:StateQ^mNoise}
	\end{split}
\end{equation}
where $\rho_Q(\theta,N_\textrm{q}) = \Ket{\psi_{Q}(\theta,N_\textrm{q})}
	\bra{\psi_{Q}(\theta,N_\textrm{q})}$.
Here, we again use the notation $N_\textrm{q} = 2m$. Measuring this state using the set of measurement operators $E_{Q,0}$ and $E_{Q,1}$ yields the probability distribution:
\begin{align}
	p_{Q,r}(0;\theta,N_\textrm{q}) & =r^{N_\textrm{q}}\cos^2(N_\textrm{q}\theta)+(1-r^{ N_\textrm{q} })\frac{1}{d},
	\label{eq:ProbabilityNoiseQ0}                                                                                    \\
	p_{Q,r}(1;\theta,N_\textrm{q}) & =r^{N_\textrm{q}} \sin^2(N_\textrm{q}\theta)+(1-r^{N_\textrm{q}})\frac{d-1}{d}.
	\label{eq:ProbabilityNoiseQ1}
\end{align}
The classical Fisher information associated with this probability distribution is calculated as
\begin{equation}
	F_{\textrm{c},Q,r}(\theta,N_\textrm{q}) = \frac{4 {N_\textrm{q}}^2\sin^2(N_\textrm{q}\theta)\cos^2(N_\textrm{q}\theta)r^{2 N_\textrm{q}}}
	{\left(\frac{1}{d}+ \left( \cos^2 (N_\textrm{q} \theta)-\frac{1}{d} \right)r^{N_\textrm{q}}\right)
		\left(\frac{d-1}{d}+ \left( \sin^2 (N_\textrm{q} \theta)-\frac{d-1}{d} \right)r^{N_\textrm{q}}\right)},
	\label{eq:CFisherNoiseQ}
\end{equation}
which has the following upper envelope that does not depend on
the unknown parameter $\theta$:
\begin{equation}
	\begin{split}
		F_{\textrm{c},Q,r}(N_\textrm{q}) |_{\textrm{env}}
		&= 4 N_\textrm{q}^2 r^{N_\textrm{q}} +
		8 {N_\textrm{q}}^2\frac{d-1}{d^2}  \left(1-r^{N_\textrm{q}}\right)^2  \\
		&- \frac{8 N_\textrm{q}^2\left(1-r^{N_\textrm{q}}\right)\sqrt{(d-1) \left(d-1+r^{N_\textrm{q}}\right) \left((d-1) r^{N_\textrm{q}}+1\right)}}{d^2}.
		\label{eq:CFisherNoiseQ_Max}
	\end{split}
\end{equation}
Also the quantum Fisher information for the state~\eqref{eq:StateQ^mNoise} can be calculated as
\begin{equation}
	F_{\textrm{q},Q,r}(\theta,N_\textrm{q})= \frac{4 {N_\textrm{q}}^2 r^{2 N_\textrm{q}}}
	{\left(\frac{2}{d}+ \left( 1-\frac{2}{d} \right)r^{N_\textrm{q}}\right) }.
	\label{eq:QFisherNoiseQ}
\end{equation}
Like the case of $G$, this does not depend on $\theta$, which leads
that, because $F_{\textrm{c},Q,r}(\theta,N_\textrm{q})|_{\textrm{env}}$ is
the maximization of $F_{\textrm{c},Q,r}(\theta,N_\textrm{q})$ with respect
to $\theta$, we have
$F_{\textrm{q},Q,r}(\theta,N_\textrm{q})
	\geq F_{\textrm{c},Q,r}(\theta,N_\textrm{q})|_{\textrm{env}}
	\geq F_{\textrm{c},Q,r}(\theta,N_\textrm{q})$.

Now we compare
\footnote{
	Note that while we compare here as if each method takes all integer values, the actual number of queries $N_\textrm{q}$ only takes even and odd values for the $G$-based and $Q$-based methods, respectively.
}
the Fisher information of $G$ and $Q$.
The first notable fact is that their quantum Fisher information are
identical, i.e.,
$F_{\textrm{q},G,r}(\theta,N_\textrm{q})=F_{\textrm{q},Q,r}(\theta,N_\textrm{q})$.
Hence a difference may appear for the values of classical Fisher information, especially their $\theta$-independent envelopes $F_{\textrm{c},Q,r}(N_\textrm{q})|_{\textrm{env}}$
and $F_{\textrm{c},G,r}(N_\textrm{q})|_{\textrm{env}}$.
Actually, using the AM-GM inequality, we can prove
\begin{align*}
	F_{\textrm{c},Q,r}(N_\textrm{q}) |_{\textrm{env}}
	 & \geq 4 N_\textrm{q}^2 r^{N_\textrm{q}} +
	8 {N_\textrm{q}}^2\frac{d-1}{d^2}  \left(1-r^{N_\textrm{q}}\right)^2 \\
	 & - \frac{8 N_\textrm{q}^2\left(1-r^{N_\textrm{q}}\right)}{d^2}
	\left\{
	\frac{ \left(d-1+r^{N_\textrm{q}}\right) +(d-1)\left((d-1) r^{N_\textrm{q}}+1\right)}{2}
	\right\}
	\\ \nonumber
	 & = 4 {N_\textrm{q}}^2 r^{2 N_\textrm{q}}
	= F_{\textrm{c},G,r}(N_\textrm{q}) |_{\textrm{env}}.
\end{align*}
The equality holds only when $d=2$.
Summarizing, we have
\begin{equation}
	F_{\textrm{c},G,r}(N_\textrm{q}) |_{\textrm{env}}
	\leq F_{\textrm{c},Q,r}(N_\textrm{q}) |_{\textrm{env}}
	\leq F_{\textrm{q},Q,r}(\theta,N_\textrm{q})
	= F_{\textrm{q},G,r}(\theta,N_\textrm{q}).
\end{equation}
Now considering the fact that $F_{\textrm{c},G,r}(N_\textrm{q}) |_{\textrm{env}}$ never achieve its ultimate bound $F_{\textrm{q},G,r}(\theta,N_\textrm{q})$ except
the case $d=2$, we are concerned with the achievability in the $Q$ case.
Actually $F_{\textrm{c},Q,r}(N_\textrm{q}) |_{\textrm{env}}= F_{\textrm{q},Q,r}(\theta,N_\textrm{q})$ holds when $d=2$.
But more importantly, it can be proven that
\begin{equation}
	\lim_{d\to \infty}\Big\{
	F_{\textrm{c},Q,r}(N_\textrm{q}) |_{\textrm{env}} -  F_{\textrm{q},Q,r}(\theta,N_\textrm{q}) \Big\} = 0.
	\nonumber
\end{equation}
This implies that, for a large number of qubits, the enveloped classical
Fisher information nearly achieves the ultimate bound.
The intuitive understanding for this remarkable difference between $Q$
and $G$ is that the pure noise effect for the proposed $Q$-based method, the second term of \eqref{eq:ProbabilityNoiseQ0}, exponentially decreases with respect to the number of qubits, which would allow us to efficiently extract the signal component given by the first term of \eqref{eq:ProbabilityNoiseQ0}.
On the other hand, the pure noise effect on the $G$-based method,
the second term of both of \eqref{eq:ProbabilityNoiseG0} and~\eqref{eq:ProbabilityNoiseG1} is about $1/2$, and thus the signal may
be easily buried in the noise.

\begin{figure}[tbp]
	\begin{minipage}[b]{0.5\linewidth}
		\begin{minipage}[b]{\linewidth}
			\centering
			\subfigure[$n=1\ (d = 2)$]{\includegraphics[width=\linewidth]{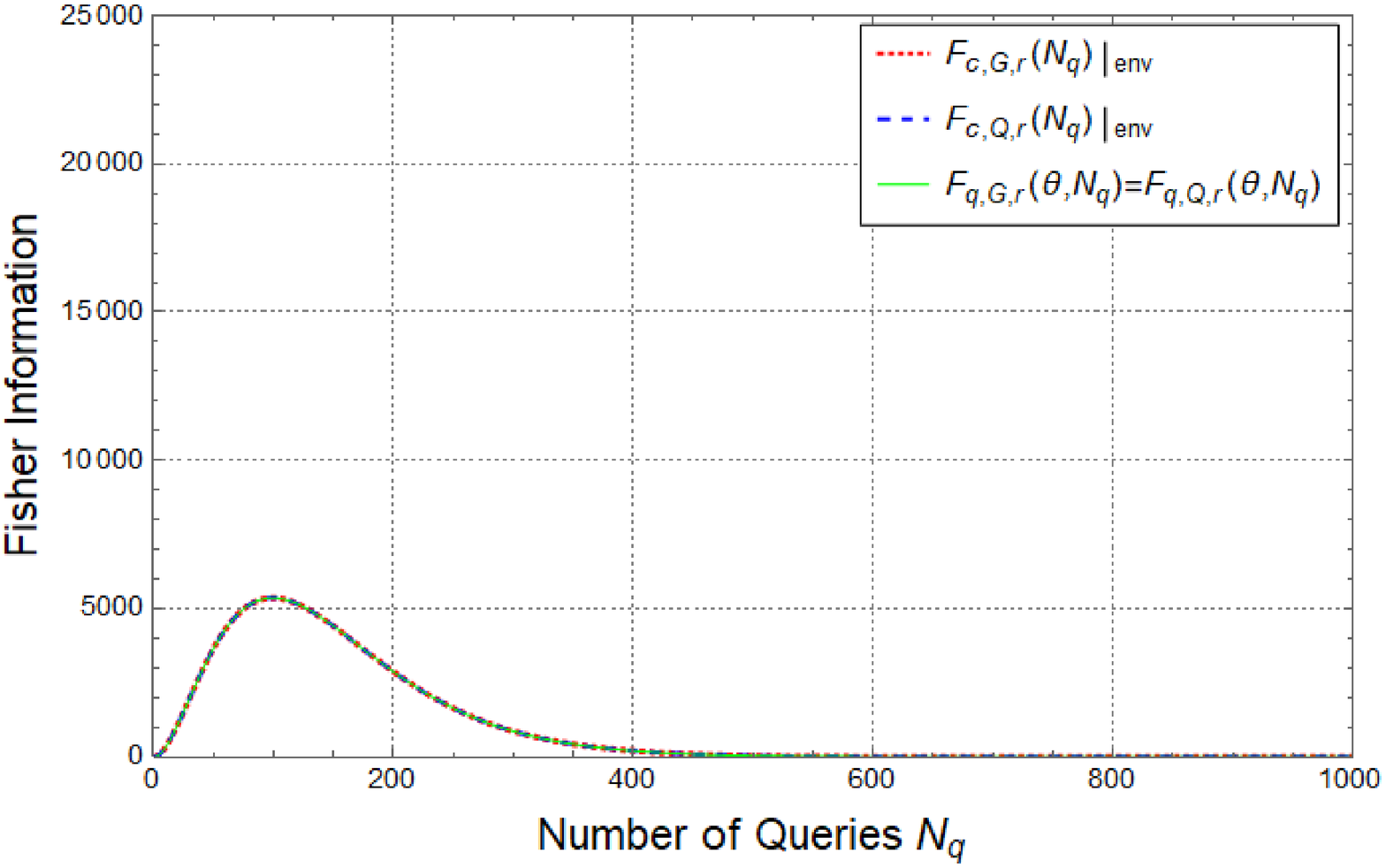}}
		\end{minipage}
		\\\\\\
		\begin{minipage}[b]{\linewidth}
			\centering
			\subfigure[$n=10\ (d = 2^{10})$]{\includegraphics[width=\linewidth]{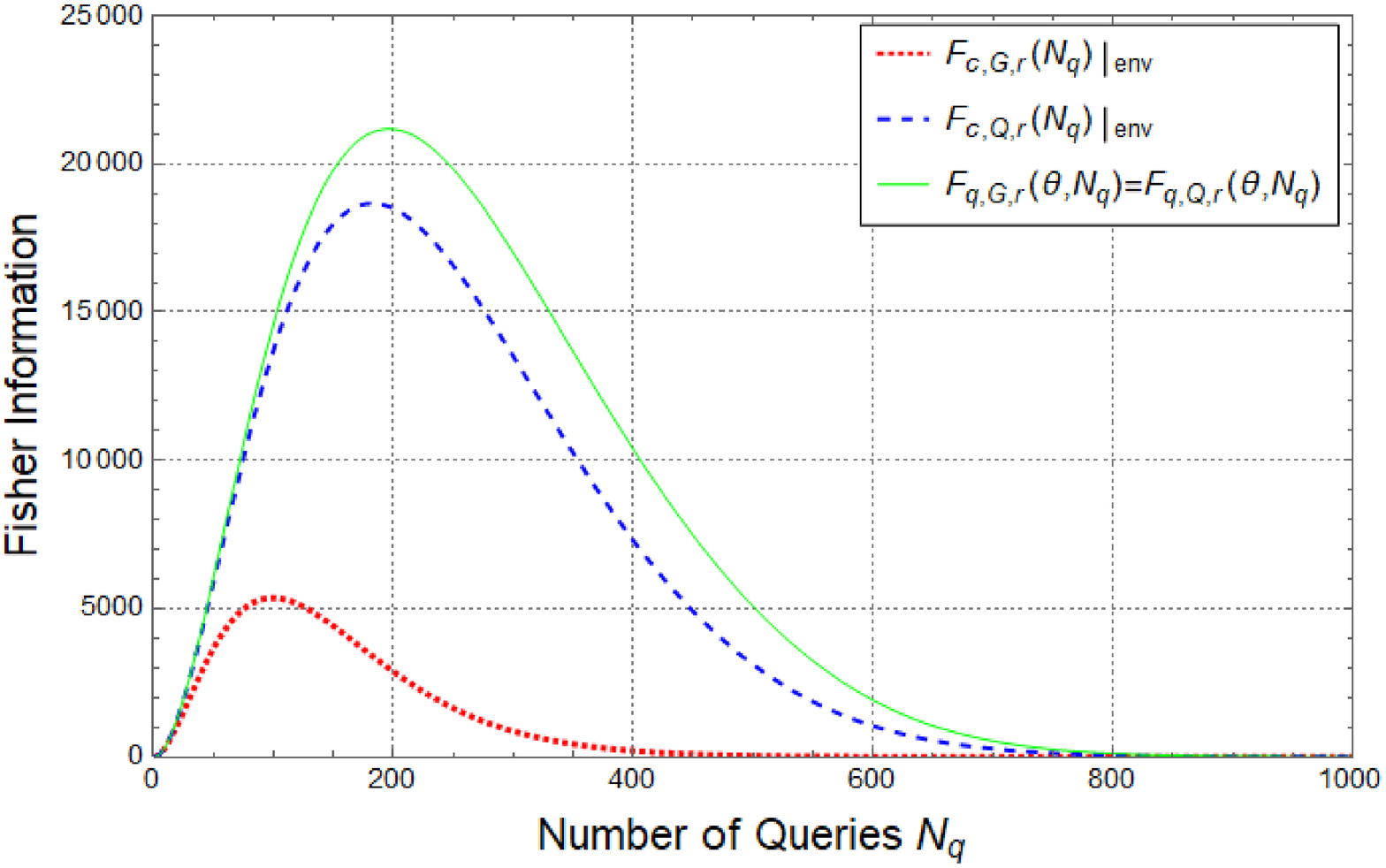}}
		\end{minipage}
	\end{minipage}
	\begin{minipage}[b]{0.5\linewidth}
		\begin{minipage}[b]{\linewidth}
			\centering
			\subfigure[$n=100\ (d = 2^{100})$]{\includegraphics[width=\linewidth]{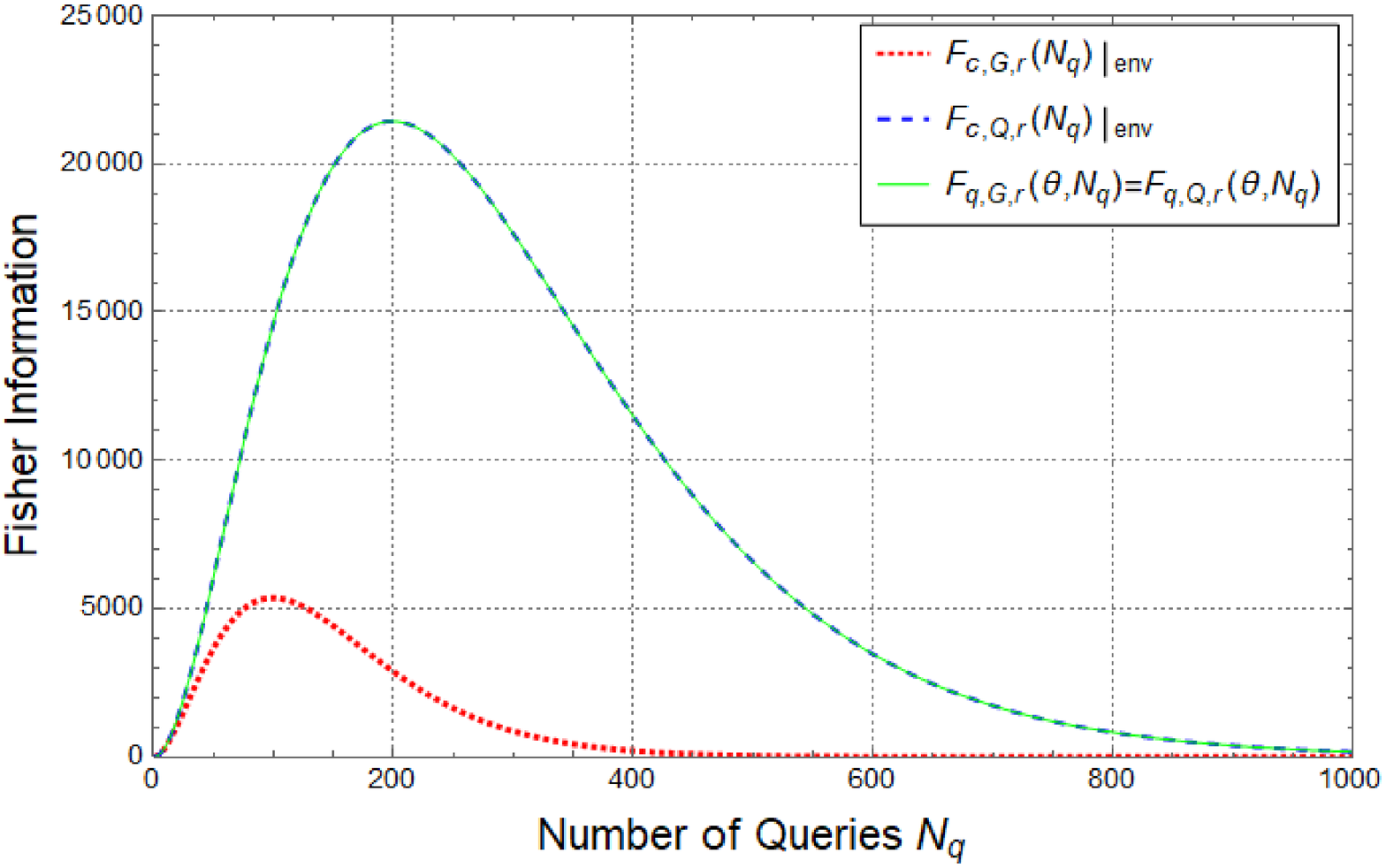}}
		\end{minipage}
		\\\\\\
		\begin{minipage}[b]{\linewidth}
			\centering
			\subfigure[$n\to\infty\ (d\to\infty)$]{\includegraphics[width=\linewidth]{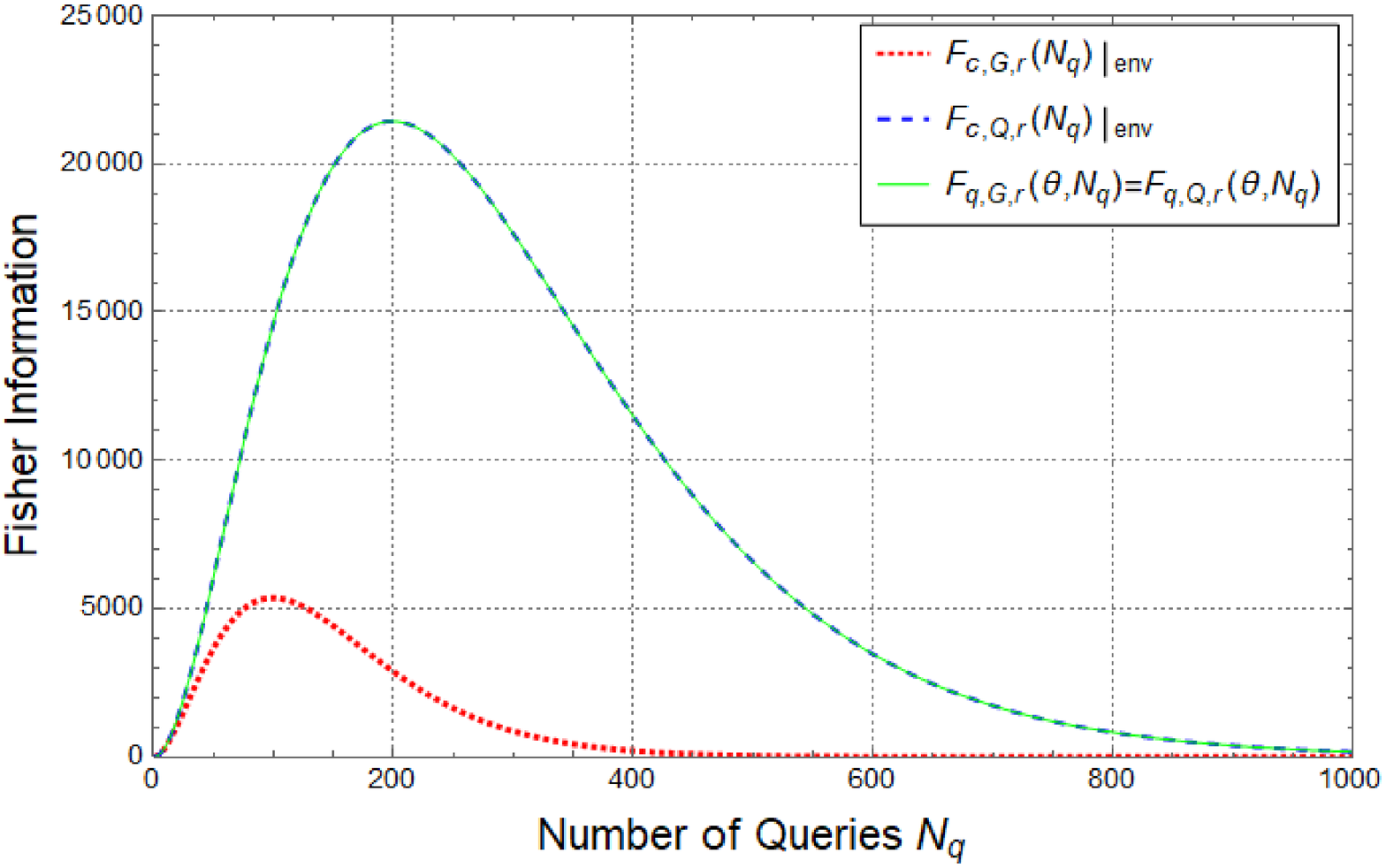}}
		\end{minipage}
	\end{minipage}
	\caption{
	Relationship between the number of queries and the Fisher information, with several values of the number of qubits, $n$.
	The envelope of the classical Fisher information of the conventional $G$-based method, $F_{\textrm{c},G,r}(N_\textrm{q})|_{\textrm{env}}$ defined in \eqref{eq:CFisherNoiseG_Max}, is depicted as the (red) dotted line, and the envelope of the classical Fisher information of the proposed $Q$-based method,  $F_{\textrm{c},Q,r}(N_\textrm{q})|_{\textrm{env}}$ defined in \eqref{eq:CFisherNoiseQ_Max}, is depicted as the (blue) dashed line.
	The quantum Fisher information $F_{\textrm{q},Q,r}(\theta,N_\textrm{q})=F_{\textrm{q},G,r}(\theta,N_\textrm{q})$ defined in \eqref{eq:QFisherNoiseQ} and \eqref{eq:QFisherNoiseG} is depicted as the (green) solid line.
	The noise parameter is set as $r=0.99$.
	}\label{fig:FisherInformation_p99}
\end{figure}

Figure~\ref{fig:FisherInformation_p99} illustrates the classical and quantum Fisher information of the $G$- and $Q$-based methods,
as a function of number of queries $N_\textrm{q}$.
Figure~\ref{fig:FisherInformation_p99}(a) shows that all the Fisher information presented above coincides with each other for the single
qubit case $n=1$.
For the case (b) $n=10$, a clear difference between $F_{\textrm{c},G,r}(N_\textrm{q})|_{\textrm{env}}$ and $F_{\textrm{c},Q,r}(N_\textrm{q})|_{\textrm{env}}$ can be seen;
notably, the latter is already close to the ultimate bound $F_{\textrm{q},Q,r}(N_\textrm{q})$ even with such a small number of qubits.
For the case (c) $n=100$, $F_{\textrm{c},Q,r}(N_\textrm{q})|_{\textrm{env}}$ almost reaches the bound $F_{\textrm{q},Q,r}(N_\textrm{q})$, while $F_{\textrm{c},G,r}(N_\textrm{q})|_{\textrm{env}}$ does show nearly zero change
from that of (b).
The shape of functions in (c) is almost the same as those shown in the case (d) $n=\infty$,
where $F_{\textrm{c},Q,r}(N_\textrm{q})|_{\textrm{env}}=F_{\textrm{q},Q,r}(N_\textrm{q})$
takes the maximum value $16/(e^2 \ln^2 r)$ at $N_\textrm{q}=-2 / \ln r$, which is four times larger than the maximum value of $F_{\textrm{c},G,r}(N_\textrm{q})|_{\textrm{env}}$, which takes $4/(e^2 \ln^2 r)$
at $N_\textrm{q}=-1 / \ln r$.
While we described the case with specific noise level $r=0.99$, the same
argument holds for other noise levels.
That is, for the Fisher information given in Eqs.~\eqref{eq:CFisherNoiseG_Max}, \eqref{eq:QFisherNoiseG} and \eqref{eq:CFisherNoiseQ_Max},
replacing $r$ with $r^c$ ($c$ is a positive constant) yields the same Fisher
information, except that the vertical axis is scaled with $1/c$ and horizontal
axis is scaled with $1/c^2$.

Therefore, as for the enveloped function, the $Q$-based method
using the $Q$ operator gives an estimator that not only outperforms
the $G$-based method but also, relatively easily, approaches to the ultimate estimation bound given by the inverse of quantum Fisher information.

Here we add a comment that the state $\Ket{0}_{n+1}$ is the optimal input for $Q$, in contrast to $A\ket{0}_{n+1}$ being the optimal input for $G$ mentioned in Section~\ref{sec:DepolarizationG}; see~\ref{sec:ProofMaximumQFI}.
This indicates that the quantum Fisher information obtained here is the optimal value for any method using the operator $Q$.
Recall now that, in general, to achieve the estimation accuracy given
by the inverse of quantum Fisher information, we need to elaborate
a sophisticated adaptive measurement strategy, e.g., varying the
number of amplitude amplification operations depending on
the measurement result.
In the next section, nonetheless, we demonstrate by numerical simulations that the $Q$-based method has a solid advantage over the $G$-based method even when employing a simple, non-adaptive measurement method in which a sequence of amplitude amplification operators is scheduled in advance.

Lastly we discuss on the readout error.
Recall that the $Q$-based method requires all qubits to be measured, while the
$G$-based method requires only one qubit measurement.
This may seem to cause large readout error in the $Q$-based method.
In fact, if the readout errors happen independently at every bit with
probability $\epsilon$, then in the $Q$-based method we will obtain a factor
of $(1-\epsilon)^n$ on the Fisher information.
The error rate of the current devices is about $\epsilon = 0.01$.
In this case the advantage of the $Q$-based method against the $G$-based
method will diminish at around $n = 70$ qubits, which is obtained by solving
$(1-\epsilon)^n=1/2$.
Thus, to fully utilize the $Q$-based method, we need to apply the readout error
mitigation technique, such as \cite{bravyi2021mitigating}.
Note that those techniques in general require exponential-time post processing,
but in our case we can overcome the limitations by exploiting the unique
property of the $Q$-based method.
That is, in order to obtain a precise estimate of $\theta$, it only needs
to correctly estimate the probability of observing the all-zero bits, in
contrast to the $G$-based method that needs to correctly estimate the
probability of observing bits whose last one is zero (there can be up to
$2^{n-1}$ of them).
Hence, when we perform $N_{shot}$ measurements and obtain at most
$N_{shot}\ll 2^n$ different outcomes, we only need to fix the frequency of
observing the all-zero bits from those $N_{shot}$ outcomes in the $Q$-based
method; this can be done efficiently (polynomial in $n$ and $N_{shot}$) for
the case of tensor product error model\cite{yang2021efficient}.

\section{Numerical Simulation}\label{sec:NumericalSimulation}

The purpose of this simulation is to study the performance of an  estimator constructed via the measurement result, compared to the
inverse of Fisher information.
Here we employ the same maximum likelihood estimator developed in
the previous papers~\cite{suzuki2020amplitude,tanaka2020amplitude},
which is non-adaptive and thus is not guaranteed to achieve the
$\theta$-dependent quantum Fisher information.
This estimation method begins with preparing several states in
parallel, where the amplitude of the target state is amplified
according to a pre-scheduled sequence $\{m_k\}$ for $k=0,1,\ldots$;
then we make measurements on these states characterized by $m_k$
and finally combine all the measurement results to construct the
maximum likelihood estimator on the amplitude parameter.
In particular, to achieve the Heisenberg scaling in the
region where the noise has little effect on the measurement results,
we employ an exponentially increasing sequence
$m_k=\lfloor b^{k-1} \rfloor$ for $k=0,1,\ldots$, where $b$ is some
real number greater than $1$; in this work, we take $b=6/5$.
Also, we fix the number of measurement to $N_{\textrm{shot}}=100$
for all $k$.
The number of qubit is $n=100$ (i.e., $d=2^{100}$) and the noise
strength is $r=0.99$ as in the case (c) in Figure~\ref{fig:FisherInformation_p99}.
The parameter values used in the numerical simulation are listed in Table~\ref{tab:ExperimentalParamters}.
We repeated the same experiment $200$ times to evaluate the root mean squared error of the estimate $\hat{\theta}$.
Since the proposed $Q$-based method with $m_0=0$ does not depend on
the parameter~$\theta$, the measurement result obtained for the case
$m_0=0$ is not used for the $Q$ case, but note that it is used for the conventional $G$ case.
Also recall that the depolarization process~\eqref{eq:DepolarizationChannel} acts on the state every time when
$A$ is operated.

Figure~\ref{fig:CRB_p999} shows the relationship between the root mean squared estimation error and the total number of queries defined
as $N_\textrm{q}^{\textrm{tot}} = \sum_{k}N_{\textrm{shot}}N_\textrm{q}(m_k)$, for several values of
target amplitude $a=\sin^2\theta$.
In each subfigure, the simulation results using the maximum likelihood estimator described above are plotted by the circle and triangle
points for the $G$ and $Q$ cases respectively.
Also the three lines depict the quantum and classical Cram\'er--Rao
lower bounds given by the inverse of Fisher information.
In addition, for reference, the gray dot-dashed line and the orange dot-dot-dashed line depict the lower bound in the ideal noiseless case (i.e., the Heisenberg scaling) and that without any amplitude amplification (i.e., the standard quantum limit), respectively.
We then see that, for all the cases from (a) to (f) of Figure~\ref{fig:CRB_p999}, the almost all simulation results well approximate the classical Cram\'er--Rao lower bound (CCRB), as expected from the asymptotic efficiency of maximum likelihood estimate.
The deviation in the subfigures (a), (b), and (c) for a small range of $N_q^{tot}$ would be caused by the bias of maximum likelihood estimate. We have observed that this deviation decreases when the number of shots is further increased, which is consistent with the fact that the maximum likelihood estimate is asymptotically unbiased in the limit $N_\textrm{shot}\to \infty$.
Except for these few biased results, in the short range of $N_\textrm{q}^{\textrm{tot}}$ where the influence from the noise is small, the $N_\textrm{q}^{\textrm{tot}}$-dependence of the estimation error of the $Q$-based method is almost identical with the $G$-based method.
Moreover, in this region, they are also almost identical with the error given by the quantum Fisher information, all exhibiting the Heisenberg scaling.
More precisely, the $G$-based method shows the Heisenberg scaling up to about $ N_\textrm{q}^{\textrm{tot}}\sim 0.5\times 10^5$, while the $Q$-based method does up to $ N_\textrm{q}^{\textrm{tot}}\sim 1\times 10^5$.
As $N_\textrm{q}$ increases, the estimation errors saturate to constant values due to the influence of noise, where there is a difference between $Q$ and $G$; that is, the saturated error of the former is about half  the latter.
Also, the saturated error of the $G$ case is about $2.5$ to $3$ times larger than the quantum Fisher information, whereas that of $Q$ is
about $1.2$ to $1.5$ times larger than the quantum Fisher information.
Around the rightmost points of the graphs, which correspond to $m\sim700$ amplifications, the lines without amplitude amplification intersects with those of the $G$-based one; but beyond this range the $Q$-based algorithm still has a superiority.
Hence there is certainly a range of queries where the $Q$-based algorithm is better than the $G$-based one as well as the one without amplifications even relatively large noise $1-r=0.01$
(i.e., 1$\%$ depolarization noise).
Overall, these results may be used as a numerical evidence to emphasize that the $Q$-based method, even without an adaptive setting, has a practical advantage over the $G$-based one.

\begin{table}[htb]
	\begin{center}
		\caption{List of parameters for the numerical simulation}
		\begin{tabular}{cll} \hline
			number of measurements     & $N_{\textrm{shot}}$     & 100                              \\
			amplification rule         & $m_k, (k=0,1,2,\ldots)$ & $\lfloor (6/5)^{k-1} \rfloor$    \\
			noise parameter            & $r$                     & $0.99$                           \\
			dimension of Hilbert space & $d$                     & $2^{100}$                        \\
			target values              & $a = \sin^2\theta$      & $\{2/3,1/3,1/6,1/12,1/24,1/48\}$ \\
			\hline
		\end{tabular}
		\label{tab:ExperimentalParamters}
	\end{center}
\end{table}

\begin{figure}[hbtp]

	\begin{minipage}[b]{0.5\linewidth}
		\begin{minipage}[b]{\linewidth}
			\centering
			\subfigure[$a = 2/3$]{\includegraphics[width=\linewidth]{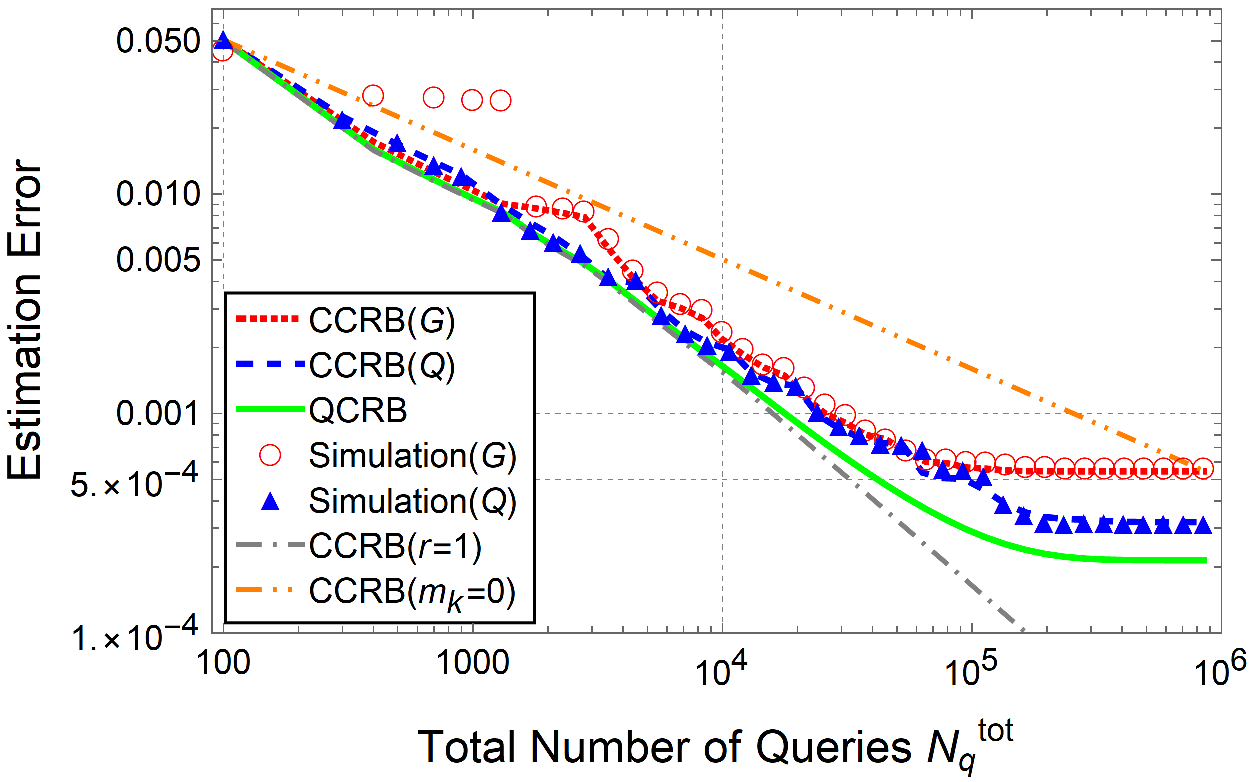}}
		\end{minipage} \\
		\begin{minipage}[b]{\linewidth}
			\centering
			\subfigure[$a = 1/3$]{\includegraphics[width=\linewidth]{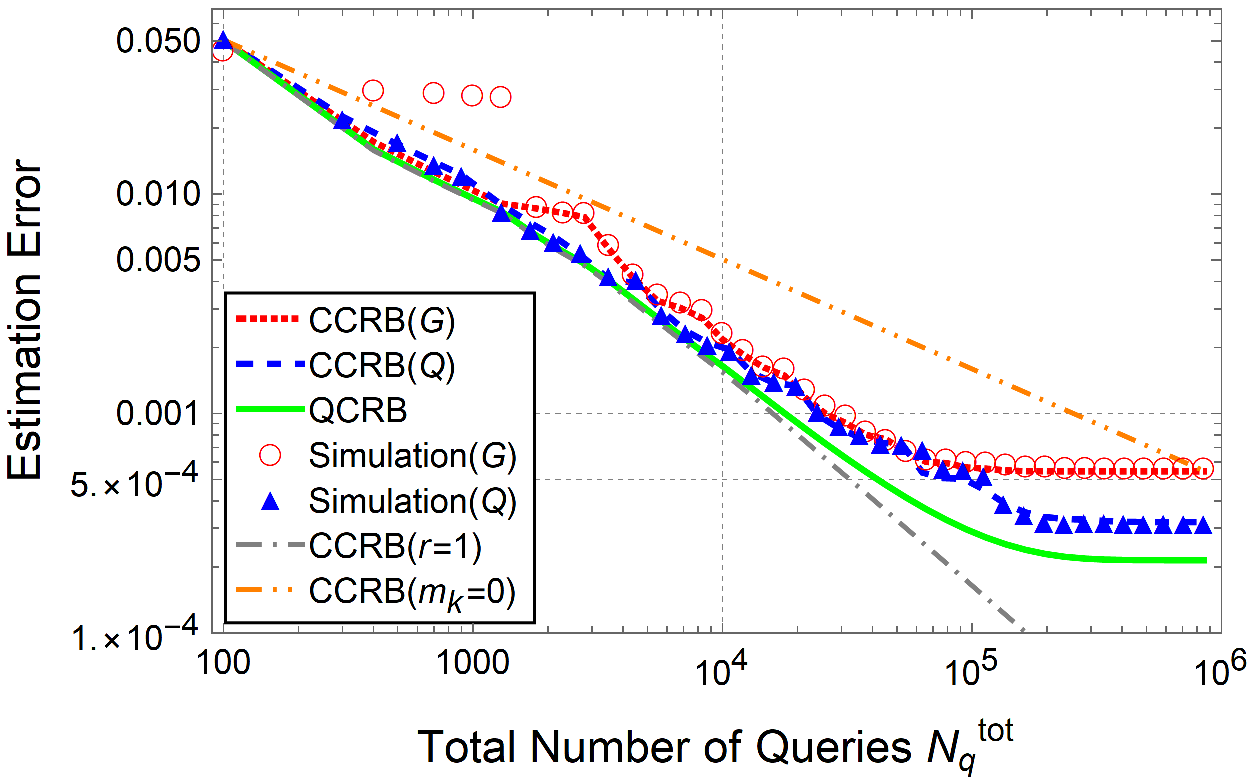}}
		\end{minipage} \\
		\begin{minipage}[b]{\linewidth}
			\centering
			\subfigure[$a = 1/6$]{\includegraphics[width=\linewidth]{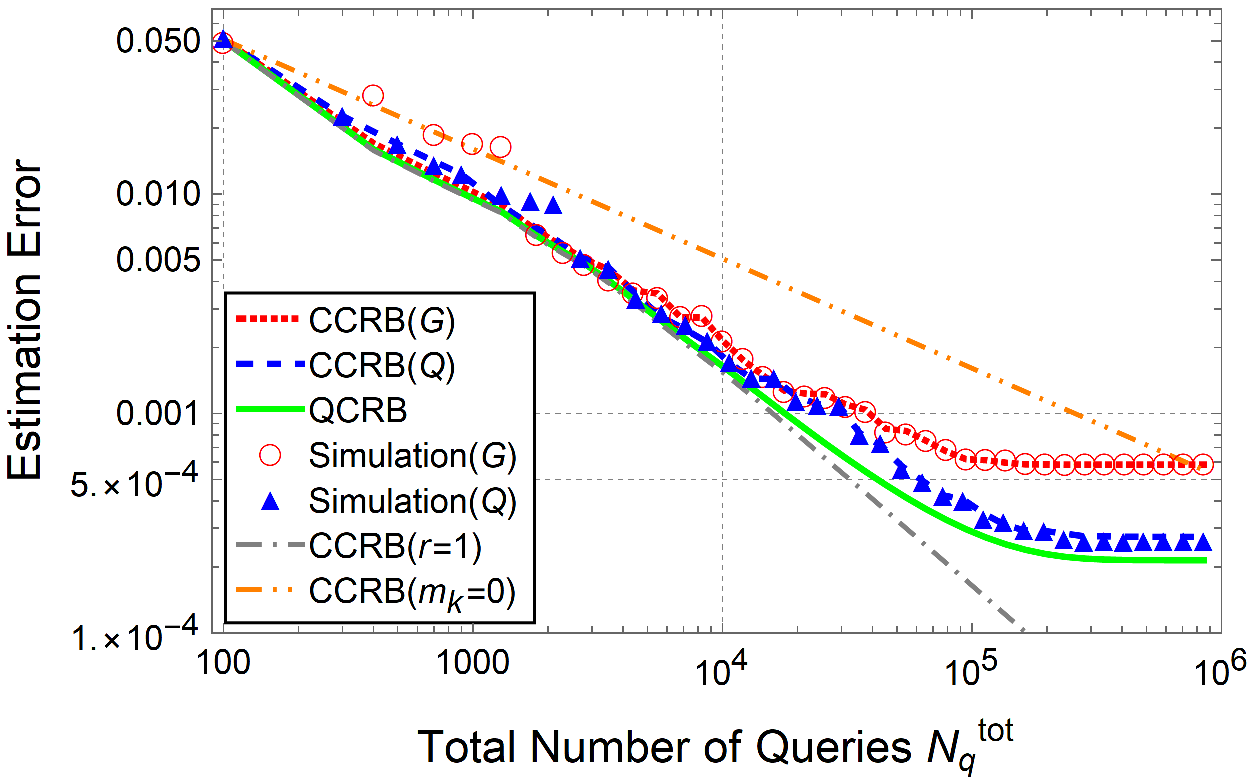}}
		\end{minipage}
	\end{minipage}
	\begin{minipage}[b]{0.5\linewidth}
		\begin{minipage}[b]{\linewidth}
			\centering
			\subfigure[$a = 1/12$]{\includegraphics[width=\linewidth]{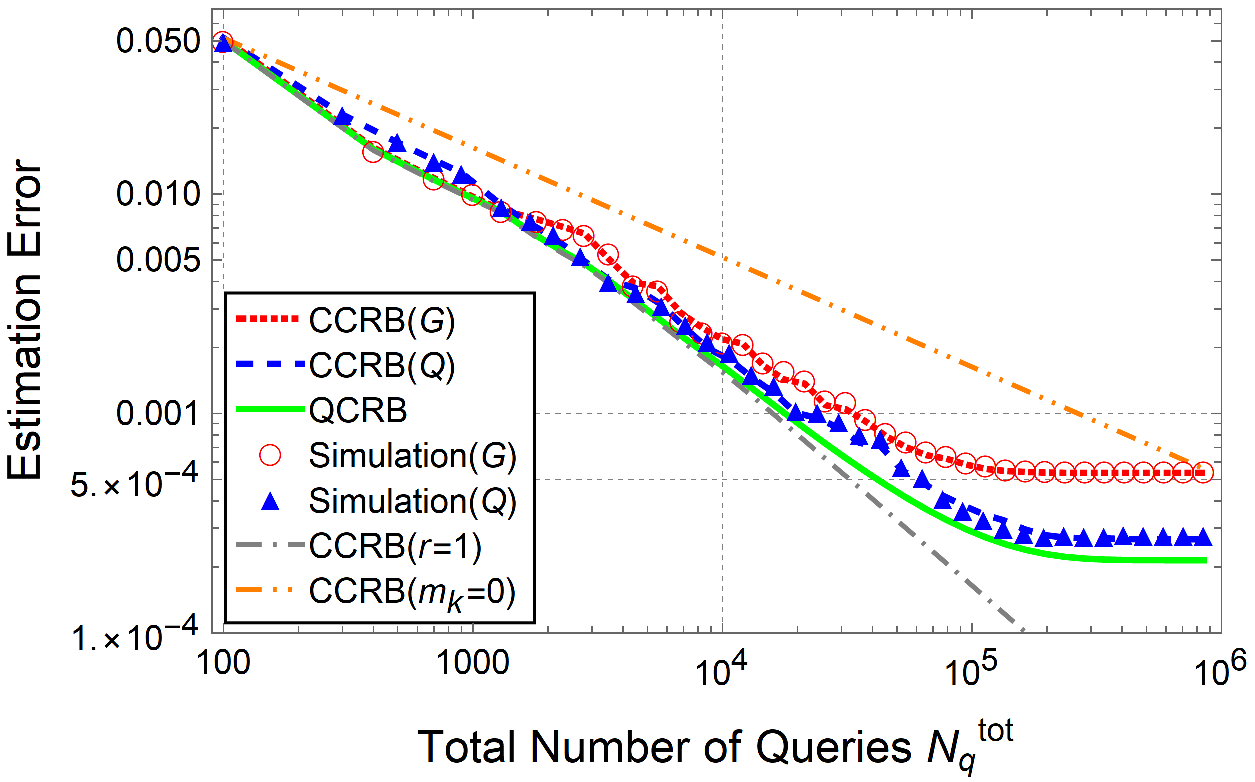}}
		\end{minipage}
		\begin{minipage}[b]{\linewidth}
			\centering
			\subfigure[$a = 1/24$]{\includegraphics[width=\linewidth]{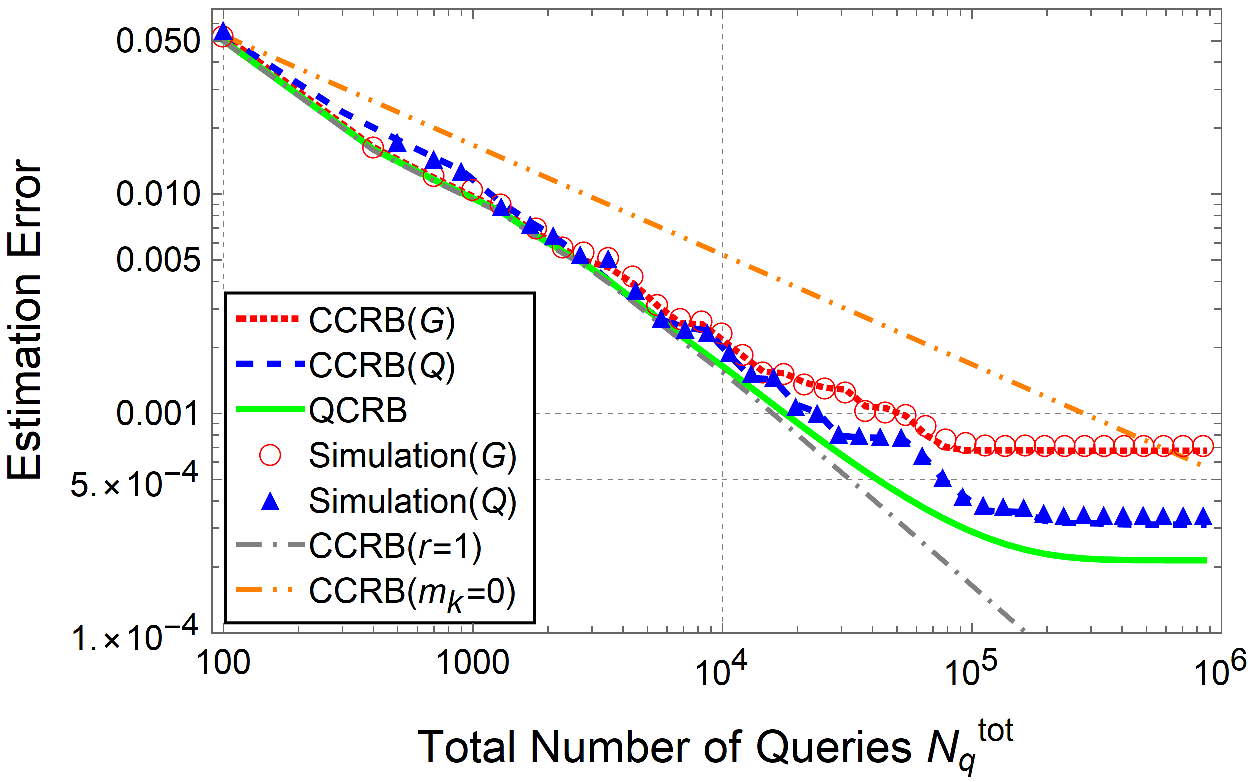}}
		\end{minipage} \\
		\begin{minipage}[b]{\linewidth}
			\centering
			\subfigure[$a = 1/48$]{\includegraphics[width=\linewidth]{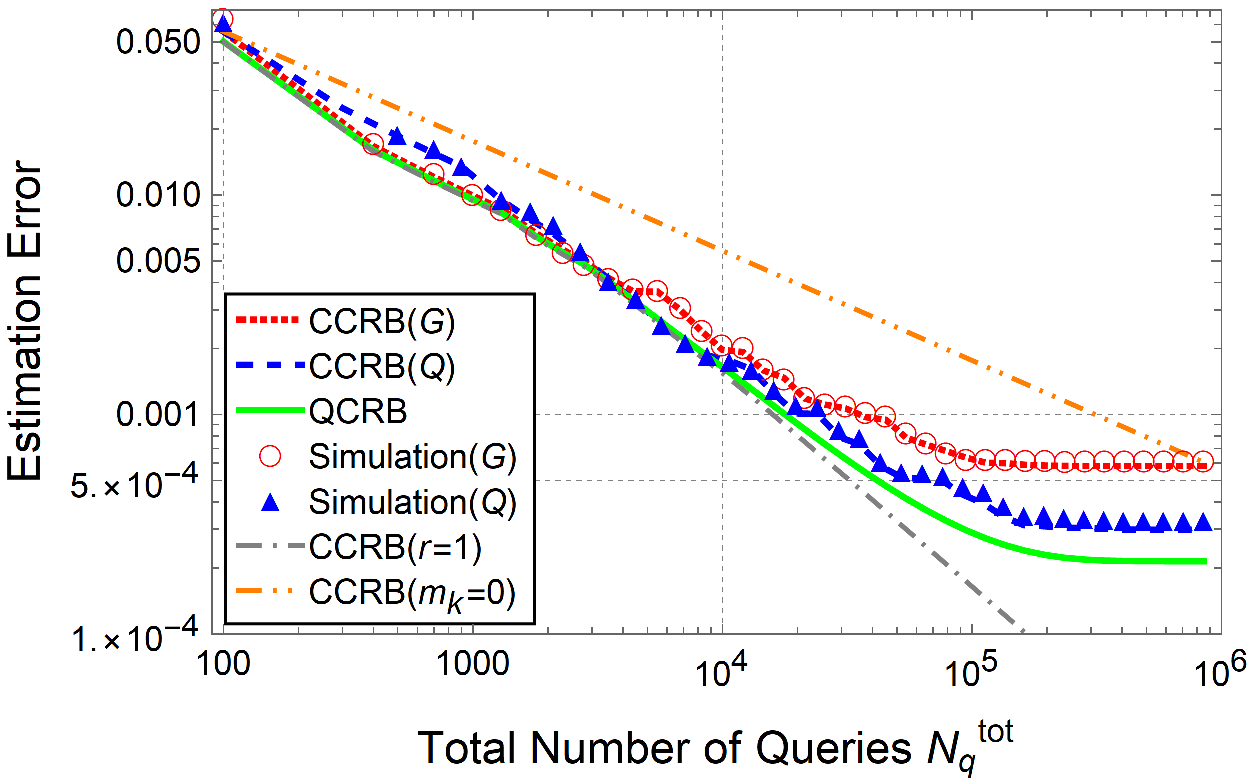}}
		\end{minipage}
	\end{minipage}
	\caption{
	Relationship between the total number of queries~$N_\textrm{q}^{\textrm{tot}}$ and the root mean squared error.
	The simulation parameters are listed in Table~\ref{tab:ExperimentalParamters}.
	The root mean square error obtained by the maximum likelihood estimation using the conventional $G$-based and proposed $Q$-based methods are plotted as (red) circles and (blue) triangles, respectively.
	The lower bounds of the Cram\'er--Rao inequality obtained from the quantum Fisher information $F_{\textrm{q},Q,r}(\theta,N_\textrm{q})=F_{\textrm{q},G,r}(\theta,N_\textrm{q})$, the classical Fisher information of the $Q$-based method $F_{\textrm{c},Q,r}(\theta,N_\textrm{q})$, and the classical Fisher information of the conventional method $F_{\textrm{c},G,r}(\theta,N_\textrm{q})$ are shown by the (green) solid line, the (blue) broken line, and the (red) dotted line, respectively.
	For reference, the lower bounds without noise ($r=1$) and without amplitude amplification ($m_k=0$) are shown by the (gray) dot-dashed line and the (orange) dot-dot-dashed line, respectively.
	}\label{fig:CRB_p999}
\end{figure}

\section{Conclusion}
\label{sec:Conclusion}

In this paper, we proposed the amplitude estimation method exhibiting
the better estimation accuracy than the conventional $G$-based method under the depolarizing noise.
Firstly, we showed that there is no measurement strategy to achieve the ultimate estimation accuracy in the conventional $G$-based method, except the 1-qubit case.
In contrast, the proposed $Q$-based method produces a larger classical Fisher information than the $G$-based one (in the sense of an enveloped function) whenever the number of qubit is larger than one, and moreover, it achieves the ultimate estimation accuracy in the limit of large number of qubits.
Note that, achieving the ultimate estimation accuracy generally requires a sophisticated adaptive algorithm, but our numerical simulations confirmed that the estimator in the $Q$-based method
well approaches to the ultimate bound even with a simple
non-adaptive strategy.

In this paper, we have numerically confirmed the advantage of the $Q$-based method to the $G$-based one for a non-adaptive measurement strategy, but concrete adaptive algorithms to achieve the quantum Fisher information are open for future research.
In addition, although the depolarizing noise was analyzed in this study, it is also important to analyze other noises such as dephasing noise and measurement noise.
Also, in this paper, we assume that the noise parameter, the parameter $r$ in \eqref{eq:DepolarizationChannel}, is known, but it is also important to analyze the case where the parameter is unknown, as studied in the recent paper~\cite{tanaka2020amplitude}.

\section*{Acknowledgement}
This work was supported by MEXT Quantum Leap Flagship Program Grant Number JPMXS0118067285 and JPMXS0120319794.

\afterpage{\clearpage}

\appendix
\section{Proof of the optimality of the quantum Fisher information}\label{sec:ProofMaximumQFI}

In the main text, we presented the quantum Fisher information in the absence of noise, \eqref{eq:QFisherPureG} and \eqref{eq:QFisherPureQ}, and those under the depolarizing noise, \eqref{eq:QFisherNoiseG} and \eqref{eq:QFisherNoiseQ}.
In this Appendix, we prove that these quantum Fisher information are optimal when employing a sequential strategy.

The main theorem is as follows.
\begin{theorem}
	Let $\ket{\psi_N}$ be a state defined in a $d$-dimensional Hilbert space as
	\begin{equation}
		\Ket{\psi_N} = U_N V_N(\theta) U_{N-1} V_{N-1}(\theta) \cdots U_2 V_2(\theta) U_1 V_1(\theta) U_0 \Ket{0}.
		\label{eq:StateNTimesOracle}
	\end{equation}
	where $U_i\ (i=0,1,\ldots,N)$ represents an arbitrary unitary operator independent of $\theta$,
	and $V_i(\theta)\ (i=1,\ldots,N)$ is a unitary operator which is differentiable with respect to $\theta$ and whose derivative is non-expansive (that is, $\abs{\frac{\partial V_i(\theta)}{\partial \theta} \Ket{v}} \le \abs{\ket{v}}$ holds for any vector $\Ket{v}$). When we estimate $\theta$ from the measurement results sampled from the state $\ket{\psi_N}$, the quantum Fisher information $F$ satisfies the following inequalities:
	\begin{enumerate}
		\item{$F\le 4 N^2$ in the absence of noise.}
		\item{ $ F\le \frac{4 {N}^2 \left( \prod_{i=1}^N r_i \right)^2}{\frac{2}{d}+ \left( 1-\frac{2}{d} \right)  \prod_{i=1}^N r_i}$ under the depolarizing noise, where $V_i(\theta)\ (i=1,2,\ldots,N)$ is subject to the depolarizing noise of strength $r_i$.
		      }
	\end{enumerate}
	\label{the:UpperBoundQuantumFisher}
\end{theorem}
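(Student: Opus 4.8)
The plan is to handle the two claims in order, using the noiseless bound (i) as the essential ingredient for the noisy bound (ii).

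For (i), I would start from the pure-state formula $F = 4\bigl(\braket{\dot\psi_N|\dot\psi_N} - \abs{\braket{\dot\psi_N|\psi_N}}^2\bigr)$ already used in \eqref{eq:QFisherPureG}, and discard the nonnegative second term to obtain $F \le 4\braket{\dot\psi_N|\dot\psi_N}$. It then remains to bound the norm of $\Ket{\dot\psi_N}$. Differentiating \eqref{eq:StateNTimesOracle} by the product rule gives a sum of $N$ terms, the $k$-th of which replaces $V_k(\theta)$ by $\partial_\theta V_k(\theta)$ while leaving all other factors untouched. Every operator standing to the left of $\partial_\theta V_k$ is unitary and thus norm-preserving, and every operator to its right is unitary acting on the unit vector $\Ket{0}$; hence the non-expansiveness hypothesis $\abs{\partial_\theta V_k\Ket{v}} \le \abs{\Ket{v}}$ forces each term to have norm at most $1$. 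The triangle inequality yields $\|\Ket{\dot\psi_N}\| \le N$, so $F \le 4N^2$.

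For (ii), the preliminary observation is that the depolarizing channel $\mathcal{D}_r(\rho) = r\rho + (1-r)\mathbf{I}_{d}/d$ commutes with conjugation by any unitary, because $U(\mathbf{I}_{d}/d)U^\dagger = \mathbf{I}_{d}/d$. Interleaving each noisy gate $V_i$ (followed by depolarization of strength $r_i$) with the noiseless $U_i$ and pushing every maximally mixed contribution to the output, a short induction collapses the final state to the isotropic mixture $\rho_N = \lambda\,\Ket{\psi_N}\Bra{\psi_N} + (1-\lambda)\,\mathbf{I}_{d}/d$ with $\lambda = \prod_{i=1}^N r_i$, where $\Ket{\psi_N}$ is exactly the noiseless state of \eqref{eq:StateNTimesOracle}. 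This reduces the problem to the Fisher information of a single isotropic mixture whose pure part carries all the $\theta$-dependence, precisely the structure treated in Section~\ref{sec:DepolarizationG}. I would then evaluate that Fisher information exactly: since the maximally mixed part is $\theta$-independent, the ansatz $L_S = c\,\dot\rho_N$ with $c = 2/\bigl(\tfrac{2}{d} + (1-\tfrac{2}{d})\lambda\bigr)$ solves the SLD equation, as in \eqref{eq:SLDNoiseG}. A direct trace computation using $\dot\rho_N = \lambda(\Ket{\dot\psi_N}\Bra{\psi_N} + \Ket{\psi_N}\Bra{\dot\psi_N})$ together with $\mathrm{Re}\braket{\psi_N|\dot\psi_N} = 0$ gives $\mathrm{Tr}(\dot\rho_N^2) = 2\lambda^2\bigl(\braket{\dot\psi_N|\dot\psi_N} - \abs{\braket{\dot\psi_N|\psi_N}}^2\bigr) = \tfrac12\lambda^2 F_{\mathrm{pure}}$, where $F_{\mathrm{pure}}$ denotes the pure-state Fisher information of $\Ket{\psi_N}$. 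Combining this with $F = \mathrm{Tr}(\dot\rho_N L_S) = c\,\mathrm{Tr}(\dot\rho_N^2)$ produces the identity $F = \lambda^2 F_{\mathrm{pure}}/\bigl(\tfrac{2}{d} + (1-\tfrac{2}{d})\lambda\bigr)$. As this expression is increasing in $F_{\mathrm{pure}}$ and part (i) gives $F_{\mathrm{pure}} \le 4N^2$, substitution yields the stated bound.

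The step I expect to be the main obstacle is the reduction in (ii): making the induction fully rigorous, so that the injected $\mathbf{I}_{d}/d$ terms, being fixed points of every subsequent unitary conjugation and of every later depolarizing channel, survive to the output while the accumulated weights telescope to $1-\prod_{i=1}^N r_i$, and then confirming that $L_S \propto \dot\rho_N$ genuinely solves the SLD equation for the resulting full-rank mixture. Once this reduction is established, the remaining estimate is inherited directly from (i) together with the monotonicity remark, so the analytic content is concentrated almost entirely in the norm bound of part (i) and the structural collapse of the noisy state.
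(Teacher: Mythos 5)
Your proposal is correct and takes essentially the same route as the paper's proof: part (i) rests on the non-expansiveness bound $\braket{\dot{\psi}_N|\dot{\psi}_N}\le N^2$ (your triangle-inequality argument is equivalent to the paper's term-by-term bound on the double-sum expansion of the norm), and part (ii) uses the same isotropic-mixture form of the noisy state with the same SLD ansatz $L_S\propto\dot{\rho}_N$, both hinging on $\mathrm{Re}\braket{\dot{\psi}_N|\psi_N}=0$. The only difference is organizational: you condense the noisy case into the exact identity $F=\lambda^2 F_{\mathrm{pure}}/\bigl(\tfrac{2}{d}+(1-\tfrac{2}{d})\lambda\bigr)$ and then invoke part (i) by monotonicity, whereas the paper expands $\mathrm{Tr}(\rho_N^r L_S^2)$ and maximizes each term directly using the same two facts, arriving at the same bound.
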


The operators $A$ and $A^\dagger$ of \eqref{eq:InitialStateG} can be regarded as examples of $V_i(\theta)$ in this theorem, if the action of $\frac{\partial A}{\partial \theta}$ on states other than $\Ket{0}_{n+1}$ and $\Ket{\phi}_{n+1}$ (and $\frac{\partial A^\dagger}{\partial \theta}$ on states other than $\Ket{\psi_0}_n\Ket{0}$ and $\Ket{\psi_1}_n\Ket{1}$) is non-expansive.
Therefore, we can conclude that the quantum Fisher information obtained in the main text is optimal in the context of Theorem~\ref{the:UpperBoundQuantumFisher}.

In the following, we give the proof of
Theorem~\ref{the:UpperBoundQuantumFisher}, based on the following three Lemmas.

\begin{lemma}
	$\braket{\dot{\psi}_N|\psi_N}$ is a pure imaginary number. Here the overdot represents the derivative with respect to $\theta$.
	\label{lem:innerProducPsiPsiDot}
\end{lemma}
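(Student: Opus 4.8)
The plan is to use nothing beyond the normalization of $\Ket{\psi_N}$. Since every factor $U_i$ and $V_i(\theta)$ appearing in \eqref{eq:StateNTimesOracle} is unitary, $\Ket{\psi_N}$ is a unit vector for each value of $\theta$, so $\braket{\psi_N|\psi_N}=1$ holds identically in $\theta$. The whole argument rests on differentiating this identity and reading off a reality condition.

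First I would differentiate $\braket{\psi_N|\psi_N}=1$ with respect to $\theta$. By the product rule this gives $\braket{\dot{\psi}_N|\psi_N}+\braket{\psi_N|\dot{\psi}_N}=0$. Next I would use that the Hermitian inner product satisfies $\braket{\psi_N|\dot{\psi}_N}=\overline{\braket{\dot{\psi}_N|\psi_N}}$, so the left-hand side equals $2\,\mathrm{Re}\,\braket{\dot{\psi}_N|\psi_N}$. Therefore $\mathrm{Re}\,\braket{\dot{\psi}_N|\psi_N}=0$, i.e.\ $\braket{\dot{\psi}_N|\psi_N}$ is pure imaginary, which is exactly the claim.

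There is essentially no obstacle here: the entire content is the normalization $\braket{\psi_N|\psi_N}=1$, which is immediate from the unitarity of the operators composing $\Ket{\psi_N}$, while the differentiation and complex-conjugation steps are routine. The only point worth stressing is that the argument uses nothing about the specific structure of the $U_i$ or $V_i(\theta)$ beyond unitarity, so the lemma applies verbatim to the fully general sequential state \eqref{eq:StateNTimesOracle} and, as a special case, to the states $\Ket{\psi_G(\theta,N_\textrm{q})}$ and $\Ket{\psi_Q(\theta,N_\textrm{q})}$ of the main text. This reality property is precisely the fact that makes the pure-state quantum Fisher information reduce to the bracketed expression in \eqref{eq:QFisherPureG}, so the lemma is the natural first step toward the bounds of Theorem~\ref{the:UpperBoundQuantumFisher}.
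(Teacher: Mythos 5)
Your proof is correct and follows exactly the paper's own argument: differentiate the normalization identity $\braket{\psi_N|\psi_N}=1$ with respect to $\theta$ and conclude that $\braket{\dot{\psi}_N|\psi_N}$ equals minus its complex conjugate, i.e.\ has vanishing real part. The only difference is presentational — you spell out the conjugate-symmetry step and the reliance on unitarity, which the paper leaves implicit.
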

\begin{proof}
	Differentiating both sides of $\braket{\psi_N|\psi_N}=1$ with $\theta$ gives the equation $\braket{\dot{\psi}_N|\psi_N} = -\braket{\dot{\psi}_N|\psi_N}^*$.
\end{proof}
\begin{lemma}
	The inequality $\braket{v_1|\dot{V}_i(\theta)|v_2} + \braket{v_1|\dot{V}_i(\theta)|v_2}^*\le 2$ holds for any normalized vector $\Ket{v_1}$ and $\Ket{v_2}$.
	The equality holds only when $\braket{v_1|\dot{V}_i(\theta)|v_2} = 1$.
	\label{lem:InnerProduct}
\end{lemma}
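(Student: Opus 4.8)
The plan is to reduce the claim to the Cauchy--Schwarz inequality combined with the non-expansiveness hypothesis on $\dot{V}_i(\theta)$. First I would note that the left-hand side is nothing but twice a real part, $\braket{v_1|\dot{V}_i(\theta)|v_2} + \braket{v_1|\dot{V}_i(\theta)|v_2}^{*} = 2\,\mathrm{Re}\,\braket{v_1|\dot{V}_i(\theta)|v_2}$, so the entire statement is equivalent to showing that this real part is bounded above by $1$.

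The central step is to control the modulus of the inner product. Applying Cauchy--Schwarz gives $\abs{\braket{v_1|\dot{V}_i(\theta)|v_2}} \le \abs{\ket{v_1}}\,\abs{\dot{V}_i(\theta)\ket{v_2}}$. Since $\ket{v_1}$ is normalized, the first factor equals $1$; and the hypothesis that $\dot{V}_i(\theta)$ is non-expansive, namely $\abs{\dot{V}_i(\theta)\ket{v}}\le\abs{\ket{v}}$, applied to the normalized vector $\ket{v_2}$, bounds the second factor by $1$ as well. Hence $\abs{\braket{v_1|\dot{V}_i(\theta)|v_2}}\le 1$. Because the real part of any complex number never exceeds its modulus, it follows that $2\,\mathrm{Re}\,\braket{v_1|\dot{V}_i(\theta)|v_2}\le 2$, which is the advertised inequality.

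For the equality condition I would combine the two constraints extracted above. Saturation of the inequality forces $\mathrm{Re}\,\braket{v_1|\dot{V}_i(\theta)|v_2}=1$, while the Cauchy--Schwarz chain simultaneously supplies $\abs{\braket{v_1|\dot{V}_i(\theta)|v_2}}\le 1$. Writing the inner product as $a+bi$ with $a,b$ real, these give $a=1$ together with $a^2+b^2\le 1$, which forces $b=0$ and therefore $\braket{v_1|\dot{V}_i(\theta)|v_2}=1$, exactly as stated.

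The argument is elementary throughout, so I do not anticipate a genuine obstacle. The only point demanding care is that the asserted bound is on the real part, not on the modulus: I must keep the Cauchy--Schwarz step (which controls the modulus) logically separate from the trivial inequality $\mathrm{Re}\,z\le\abs{z}$ that converts it into the stated form, and I must carry both constraints forward when isolating the equality case, since neither one alone pins down the inner product.
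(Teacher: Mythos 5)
Your proof is correct and is precisely the fleshed-out version of what the paper does: the paper's own proof is the one-line remark that the claim ``is straightforward from the assumption that the operator $\dot{V}_i(\theta)$ is non-expansive,'' and your Cauchy--Schwarz argument (modulus bounded by $1$, then $\mathrm{Re}\,z \le \abs{z}$, with the equality case pinned down by combining $\mathrm{Re}\,z = 1$ with $\abs{z}\le 1$) is the natural way to make that remark rigorous. No gaps; the careful separation of the modulus bound from the real-part bound is exactly the right point to be explicit about.
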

\begin{proof}
	It is straightforward from the assumption that the operator $\dot{V}_i(\theta)$ is non-expansive.
\end{proof}
\begin{lemma}
	The norm $\abs{\ket{\dot{\psi}_N}}$ has an upper bound $\abs{\ket{\dot{\psi}_N}}^2\le N^2$.
	\label{lem:normPsiDot}
\end{lemma}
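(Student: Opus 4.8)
The plan is to differentiate the product form~\eqref{eq:StateNTimesOracle} directly and then control the result with the triangle inequality. Because only the factors $V_i(\theta)$ depend on $\theta$, the Leibniz rule produces a sum of $N$ terms, in each of which exactly one $V_i$ is replaced by its derivative $\dot V_i(\theta)$:
\begin{align*}
	\Ket{\dot\psi_N} = \sum_{i=1}^N U_N V_N(\theta)\cdots U_i \dot V_i(\theta) U_{i-1} V_{i-1}(\theta)\cdots U_1 V_1(\theta) U_0 \Ket{0}.
\end{align*}
I denote the $i$-th summand by $\Ket{w_i}$, so that $\Ket{\dot\psi_N} = \sum_{i=1}^N \Ket{w_i}$.

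Next I would bound the norm of each $\Ket{w_i}$ separately. Every operator standing to the left of $\dot V_i(\theta)$ is unitary and hence norm-preserving, while the operators standing to the right form a product of unitaries acting on the normalized state $\Ket{0}$; consequently the vector fed into $\dot V_i(\theta)$ is of unit norm. The non-expansiveness assumption $\abs{\dot V_i(\theta)\Ket{v}}\le\abs{\Ket{v}}$ then yields $\abs{\Ket{w_i}}\le 1$ for every $i$.

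Finally, the triangle inequality gives
\begin{align*}
	\abs{\Ket{\dot\psi_N}} = \abs*{\sum_{i=1}^N \Ket{w_i}} \le \sum_{i=1}^N \abs{\Ket{w_i}} \le N,
\end{align*}
and squaring delivers $\abs{\Ket{\dot\psi_N}}^2 \le N^2$, which is the claim.

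The argument is essentially bookkeeping, and the only point demanding care is verifying that the vector entering $\dot V_i(\theta)$ in each summand is genuinely normalized before invoking non-expansiveness. There is no real obstacle, since the estimate comes from the triangle inequality alone and relies on no cancellation among the $\Ket{w_i}$; this crude-looking bound is nonetheless already tight enough to feed into the pure-state formula for the quantum Fisher information and produce the desired $F \le 4N^2$ in part (i) of Theorem~\ref{the:UpperBoundQuantumFisher}.
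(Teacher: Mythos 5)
Your proof is correct, and it finishes by a different route than the paper's. Both arguments begin from the same Leibniz expansion of $\ket{\dot{\psi}_N}$ into $N$ summands $\ket{w_i}$, but the paper squares first: it expands $\abs{\ket{\dot{\psi}_N}}^2$ into the double sum $\sum_{i,j=1}^{N}\braket{w_i|w_j}$ of $N^2$ inner products and bounds each term by $1$, invoking Lemma~\ref{lem:InnerProduct} for the off-diagonal pairs $i\ne j$ and non-expansiveness for the diagonal ones. You instead bound each summand's norm by $1$ (the vector fed into $\dot V_i(\theta)$ is a product of unitaries applied to $\ket{0}$, hence of unit norm, and $\dot V_i(\theta)$ is non-expansive, while the unitaries to its left preserve norm) and conclude via the triangle inequality. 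The mathematical content is essentially the same---the triangle inequality is itself proved by the Cauchy--Schwarz estimate on cross terms that the paper carries out by hand---but your packaging has two advantages: it bypasses Lemma~\ref{lem:InnerProduct} entirely, and it sidesteps a loose point in the paper's argument, namely that the bra vectors appearing in the cross terms are only sub-normalized (they contain a factor $\dot V_i^\dagger(\theta)$), so Lemma~\ref{lem:InnerProduct}, stated for normalized vectors, applies only after a scaling remark the paper leaves implicit. What the paper's term-by-term route buys in exchange is an explicit equality condition (every term of the double sum equal to $1$), which it reuses when stating when the bounds in Theorem~\ref{the:UpperBoundQuantumFisher} are saturated; to recover that from your version one would add the small observation that equality in the triangle inequality forces all the $\ket{w_i}$ to be parallel unit vectors.
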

\begin{proof}
	We can expand the norm as
	\begin{equation}
		\begin{split}
			\abs{\ket{\dot{\psi}_N}}^2
			= \sum_{i,j=1}^N\Bra{0}_{n+1} U_0^\dagger& V_1^\dagger(\theta) \cdots U_{i-1}^\dagger \dot{V}_i^\dagger(\theta) U_i^\dagger\cdots V^\dagger_N(\theta) U_N^\dagger  \\
			&U_N V_N(\theta) \cdots U_j \dot{V}_j(\theta) U_{j-1}\cdots V_1(\theta) U_0\Ket{0}_{n+1}.
			\label{eq:normPsiDot}
		\end{split}
	\end{equation}
	The right hand side of this equation takes the maximum value only when each term in the summation take the value~$1$, which is derived from Lemma~\ref{lem:InnerProduct} for $i\ne j$ and from the assumption that $V_i(\theta)$ is a non-expansive map for $i=j$.
\end{proof}
From these lemmas, the first part of the Theorem~\ref{the:UpperBoundQuantumFisher} (in the absence of noise) is proved as follows.
\begin{proof}
	Recall that the quantum Fisher information for a pure state $\Ket{\psi_N}$ is obtained as
	\begin{equation}
		\begin{split}
			F
			&= 4\left( \braket{\dot{\psi}_N|\dot{\psi}_N}-\left|\braket{\dot{\psi}_N|\psi_N}\right|^2 \right).
		\end{split}
	\end{equation}
	The first term in the parenthesis of the right hand side is upper bounded by $N^2$ from Lemma~\ref{lem:normPsiDot}, and the second term is upper bounded by $0$. Thus, we obtain the upper bound of quantum Fisher information as $F\le 4 N^2$.
	The equality holds only when $\braket{\dot{\psi}_N|\dot{\psi}_N}=N^2$ and $\braket{\dot{\psi}_N|\psi_N}=0$.
\end{proof}
Also, the second part of the Theorem~\ref{the:UpperBoundQuantumFisher} (under the depolarizing noise) is proved as follows.
\begin{proof}
	Assuming that the depolarizing noise with strength $r_i$ of the form \eqref{eq:DepolarizationChannel} acts after the operation of $V_i(\theta)$, the state of \eqref{eq:StateNTimesOracle} becomes
	\begin{equation}
		\rho_N^r = \tilde{r} \rho_N + \left(1-\tilde{r}\right) \frac{\mathbf{I}_{n+1}} {d}.
	\end{equation}
	Here we denote $\rho_N = \Ket{\psi_N}\Bra{\psi_N}$ and $\tilde{r}=\prod_{i=1}^N r_i$.
	Following the previous papers\cite{jiang2014quantum,yao2014multiple}, SLD operator of this state can be written as
	\begin{equation}
		L_S = \frac{2  \tilde{r}}{\frac{2}{d} + \left(1-\frac{2}{d}\right)\tilde{r}} \dot{\rho}_N.
	\end{equation}
	Then the quantum Fisher information can be calculated as
	\begin{equation}
		\begin{split}
			F &= \mathrm{Tr}\left( \rho_N^r L_S^2 \right) \\
			&=\left(\frac{2 \tilde{r} }{\frac{2}{d} + \left(1-\frac{2}{d}\right)\tilde{r}}\right)^2
			\left[
				\tilde{r}
				\left( \braket{\dot{\psi}_N| \psi_N}^2 +  \braket{\psi_N | \dot{\psi}_N }^2
				+  \braket{\dot{\psi}_N | \dot{\psi}_N }
				+  \abs{\braket{\psi_N | \dot{\psi}_N }}^2
				\right)					\right. \\
				&\left.
				\hspace{5cm} + \frac{ 1-\tilde{r}}{d}
				\left(
				\braket{\dot{\psi}_N| \psi_N}^2 +  \braket{\psi_N | \dot{\psi}_N }^2
				+  2 \braket{\dot{\psi}_N | \dot{\psi}_N }
				\right)					\right].
			\label{eq:QFisherArbitrary}
		\end{split}
	\end{equation}
	The first term in the bracket of the right hand side takes the maximum only when $\braket{\psi_N | \dot{\psi}_N } =0 $ and $\braket{\dot{\psi}_N | \dot{\psi}_N } =N^2 $ hold from Lemmas~\ref{lem:innerProducPsiPsiDot} and \ref{lem:normPsiDot}.
	The second term also takes the maximum under the same conditions.
	Substituting these conditions into \eqref{eq:QFisherArbitrary}, we obtain the upper bound of the quantum Fisher information as $F\le \frac{4 {N}^2 \tilde{r}^{2 }}{\frac{2}{d}+ \left( 1-\frac{2}{d} \right)\tilde{r} }$.
	The equality holds only when $\braket{\dot{\psi}_N|\dot{\psi}_N}=N^2$ and $\braket{\dot{\psi}_N|\psi_N}=0$.
\end{proof}

\section*{References}

\bibliographystyle{iopart-num}
\bibliography{main}
\end{document}